\newtheorem{proposition}{Proposition}
\newtheorem{theorem}{Theorem}
\newtheorem{lemma}{Lemma}
\newtheorem{corollary}{Corollary}
\newtheorem{remark}{Remark}
\newtheorem{definition}{Definition}
\def\BibTeX{{\rm B\kern-.05em{\sc i\kern-.025em b}\kern-.08em
    T\kern-.1667em\lower.7ex\hbox{E}\kern-.125emX}}
\begin{document}

\title{Privacy Enhancement in Over-the-Air Federated Learning via Adaptive Receive Scaling\\
}

\author{Faeze Moradi Kalarde$^{\star}$, Ben Liang$^{\star}$, Min Dong$^{\dagger}$, Yahia A. Eldemerdash Ahmed$^{\ddagger}$, and Ho Ting Cheng$^{\ddagger}$ \\
$^{\star}$Department of Electrical and Computer Engineering, University of Toronto, Canada \\
$^{\dagger}$Department of Electrical, Computer and Software Engineering, Ontario Tech University, Canada \\
$^{\ddagger}$Ericsson Canada, Canada
\thanks{This work was supported in part by Ericsson, the Natural Sciences and
Engineering Research Council of Canada, and Mitacs.}
}

\maketitle

\begin{abstract}
In Federated Learning (FL) with over-the-air aggregation, the quality of the signal received at the server critically depends on the receive scaling factors. While a larger scaling factor can reduce the effective noise power and improve training performance, it also compromises the privacy of devices by reducing uncertainty. In this work, we aim to adaptively design the receive scaling factors across training rounds to balance the trade-off between training convergence and privacy in an FL system under dynamic channel conditions. We formulate a stochastic optimization problem that minimizes the overall Rényi differential privacy (RDP) leakage over the entire training process, subject to a long-term constraint that ensures convergence of the global loss function. 
Our problem depends on unknown future information, and we observe that standard Lyapunov optimization is not applicable. Thus, we develop a new online algorithm, termed AdaScale, based on a sequence of novel per-round problems that can be solved efficiently.
We further derive upper bounds on the dynamic regret and constraint violation of AdaSacle, establishing that it achieves diminishing dynamic regret in terms of time-averaged RDP leakage while ensuring convergence of FL training to a stationary point. 
Numerical experiments on canonical classification tasks show that our approach effectively reduces RDP and DP leakages compared with state-of-the-art benchmarks without compromising learning performance.
\end{abstract}

\section{Introduction}
\label{INTRODUCTION}
Federated Learning (FL) leverages the computational capabilities of edge devices by allowing them to collaboratively train a global model on their local data without requiring data to be shared \cite{mcmahan2017communication}. 
To enable efficient uplink transmission from devices to the server, over-the-air (OTA) aggregation via analog transmission has emerged as an effective solution \cite{B1, firstyang2020federated, yang2020federated, B2, B4, B3, B5, B6, B7}. In each training round of OTA FL, the devices simultaneously transmit their local signals using analog modulation over a shared multiple access channel, enabling natural model aggregation through signal superposition. 

Nevertheless, OTA computation is susceptible to aggregation errors introduced by receiver noise and channel distortion.
To mitigate these errors, in each training round of OTA FL, each device scales its local signal by a transmit weight, while the server applies a scaling factor to the received signal. The design of the receive scaling factors over training rounds significantly affects the quality of the aggregated signal and thus influences training convergence.

Another concern in FL is privacy leakage, as the local signals sent from the devices reveal information about their underlying data \cite{Digitalwork1, Digitalwork3, Digitalwork2, Digitalwork4, DigitalWork5}.
To reduce data privacy risks, differential privacy (DP) \cite{dwork2014algorithmic} is commonly employed in FL. In the standard DP framework, each device clips its per-sample gradients and adds artificial noise to the batch-averaged gradients before transmission. However, in OTA FL, adding artificial noise is not necessary, as the inherent receiver noise serves as privacy noise and can provide the desired level of privacy \cite{ 18, 20, perRoundPrivacy, 19, 21, 27, 28}. Nevertheless, the receive scaling factors determine the effective noise power at the server and, consequently, the level of privacy leakage. 
Thus, it is essential to design the receive scaling factors to balance the trade-off between training convergence and privacy.

There are two main challenges in designing the receive scaling factors. First, while privacy leakage occurs in each training round, the overall leakage over all rounds is our ultimate concern. Existing works either ignore the overall leakage \cite{18, 20, perRoundPrivacy}, or use the Advanced Composition Theorem for DP over all rounds \cite{19, 21, 28, 27}, which is known to be a loose approximation, especially over a large number of rounds \cite{dwork2014algorithmic, rdp1, 23}. Second, the receive scaling decisions are coupled over the training rounds by the overall privacy leakage and the FL convergence objectives, while the future communication channel state is usually unknown. This necessitates an \textit{online} algorithm to design receive scaling over time. Existing solutions either depend on simplified assumptions about channel conditions \cite{19, 21, 27} or are heuristic-based \cite{28}, lacking theoretical performance guarantees to assess how closely they approximate the optimal solution (see Section~\ref{RELATEDWORKS}).

In this work, we aim to adaptively design the receive scaling factors for an OTA FL system under time-varying channel conditions. We address the aforementioned challenges, first, by employing the Rényi differential privacy (RDP) framework \cite{rdp1}, which allows a simple additive form for the overall privacy leakage, and second, by designing an effective online algorithm that is shown to provide strong performance guarantees with respect to the offline optimum. Our contributions are as follows:

\begin{itemize}
\item We formulate an optimization problem whose objective is to minimize the time-averaged RDP leakage of devices over the entire training process after an arbitrary number of rounds $T$. 
The problem formulation includes a constraint to ensure model convergence to a stationary point of the global loss, along with individual transmit power constraints for each device.
We further derive a sufficient condition for convergence and reformulate the problem using this condition as a surrogate convergence constraint.

\item The reformulated problem involves both a long-term objective and a long-term constraint, making it difficult to solve due to the lack of knowledge about future channel conditions. Standard Lyapunov optimization techniques are not applicable, as the long-term constraint is unbounded over the feasible set.
Instead, we develop a novel online algorithm termed AdaScale, which decomposes the original problem into convex per-round optimization problems that can be solved efficiently using bisection search.

\item We further establish an upper bound for the dynamic regret of AdaScale, with respect to an offline optimum that assumes all future information is available. 
We demonstrate that the proposed method achieves $\mathcal{O}(T^{\max\{1-\beta, {\frac{1-\beta}{2}}\}})$ dynamic regret and  $\mathcal{O}(T^{\frac{\beta-1}{2}})$ constraint violation, where $\beta$ is a tunable parameter. Furthermore, when $ 1 < \beta < 2$, the regret bound diminishes to zero, and FL training converges to a stationary point, as $T \to \infty$. 

\item We conduct numerical experiments on canonical classification datasets under typical wireless network setting. Our results show that AdaScale is nearly optimal and outperforms state-of-the-art alternatives, effectively reducing both RDP and DP leakages under the same training convergence level.

\end{itemize}

\section{Related Work}\label{RELATEDWORKS}

Among existing works on DP in OTA FL, \cite{18,20, perRoundPrivacy} consider only per-round privacy leakage. They cannot provide proper trade-off between the \textit{overall} privacy leakage and training performance. 
More recent works evaluate the overall privacy leakage throughout the training process. Among them, \cite{22, 23} analyze privacy leakage only and do not design the receive scaling factors. In contrast, \cite{19, 21, 27, 28} focus on designing the receive scaling factors to enhance training performance while imposing constraints on the overall privacy leakage.

In \cite{19}, an optimal offline solution is obtained when the future channel conditions are known. Otherwise, estimation of the future channel is used to update the offline solution over the training rounds.
This work is extended in \cite{21}, to consider a reconfigurable intelligent surface (RIS), and in \cite{27}, to consider a multi-antenna server. 
These works provide essentially offline solutions, while our objective is \textit{online} adaptation to the time-varying channels over time.

The work in \cite{28} is the closest to ours. It extends the problem formulation of \cite{21} for active RIS and proposes an online solution.  
The standard Lyapunov optimization framework is employed to formulate per-round problems, which are solved using an alternating optimization heuristic. However, since the per-round problems are not solved within a bounded optimality gap, and the objective function is not bounded over the feasible set, the Lyapunov approach does not offer any performance guarantee \cite{neelybook}. In comparison, we propose a novel online solution in AdaScale that is proven to achieve diminishing regret with respect to the offline optimum, while guaranteeing that FL training converges to a stationary point.

Finally, in all aforementioned works, the overall privacy leakage is evaluated using the Advanced Composition Theorem for DP \cite{dwork2014algorithmic}, which is known to be loose and can lead to inefficient designs \cite{rdp1, 23}. In this work, we address this limitation through a tighter analysis based on the RDP.

\section{Preliminaries}
\label{chapt6-prelimiray}

\subsection{FL System}
We consider a wireless FL system comprising a central server and $M$ edge devices. Each device, indexed by $m$, contains a local training dataset $\mathcal{D}_m=\{(\mathbf{u}_{m,i}, o_{m,i}): 1\le i \le n_m \}$, where $\mathbf{u}_{m,i}$ is the $i$-th data feature vector, and $ o_{m,i}$ is its label. The local data of device $m$ follow distribution $p_m$.
The local loss function of device $m$ is defined as
\begin{align}\label{LocalLoss}
f_m(\mathbf{w})  \triangleq \mathbb{E}_{(\mathbf{u}_{m}, o_{m}) \sim p_m} c_ml\big(\mathbf{w}; (\mathbf{u}_{m}, o_{m})\big),
\end{align}
where $l(\cdot)$ represents a sample-wise loss function, $c_m \in \mathbb{R}$ is the device loss weight, and $\mathbf{w} \in \mathbb{R}^d$ contains the model parameters.
The edge devices aim to train a global model on the server cooperatively.
This requires minimizing a global loss function defined as
\begin{align}\label{GlobalLoss}
f(\mathbf{w}) = \frac{1}{M}\sum_{m=1}^M f_m(\mathbf{w}).
\end{align}
The ultimate goal is to determine the optimal model, 
$\mathbf{w}^\star$, that minimizes
$f(\mathbf{w})$ in a distributed manner.

In this study, we adopt the conventional Federated Stochastic Gradient Descent (FedSGD) technique \cite{mcmahan2017communication} for iterative model training in FL.
We consider OTA aggregation for uplink transmission from the devices to the server. The FedSGD algorithm with OTA aggregation is described in Section~\ref{chapt6-FedSGDwithOTA}.

\subsection{Differential Privacy}

Two widely adopted notions of Differential Privacy (DP) in the literature are $(\varepsilon, \delta)$-DP and $(\alpha, \varepsilon)$-RDP.

\begin{definition}[($\varepsilon$, $\delta$)-DP \cite{dwork2014algorithmic}]
The randomized mechanism $M:\mathcal{D}\rightarrow \mathcal{R} $ with domain $\mathcal{D}$ and range $\mathcal{R}$ satisfies \((\varepsilon, \delta)\)-DP if, for any two \textit{neighboring} datasets $ \mathcal{S} \in \mathcal{D}$ and $\mathcal{S}^\prime  \in \mathcal{D} $, i.e., \(\mathcal{S}^\prime \) is formed by adding or removing a single element from \(\mathcal{S}\), and for any output set $\mathcal{R}^\prime \subseteq \mathcal{R}$,
\begin{align}\label{dpDefinition}
\Pr [M(\mathcal{S}) \in \mathcal{R}^\prime ] \leq \mathrm{e}^{\varepsilon} \Pr [M(\mathcal{S}^\prime) \in \mathcal{R}^\prime] + \delta.
\end{align}
\end{definition}

\begin{definition}[($\alpha$, $\varepsilon$)-RDP \cite{rdp1}]
The randomized mechanism $\mathcal{M} \colon \mathcal{D} \to \mathcal{R}$ satisfies $(\alpha, \varepsilon)$-RDP for $\alpha \in \mathbb{R},\ \alpha > 1$ if for any neighboring datasets $ \mathcal{S} \in \mathcal{D}$ and $\mathcal{S}^\prime  \in \mathcal{D} $, it holds that
\begin{align}
D_\alpha\big(\mathcal{M}(\mathcal{S}) \,\|\, \mathcal{M}(\mathcal{S}^\prime )\big) \leq \varepsilon,
\end{align}
where $D_\alpha(p_1 \| p_2)$ denotes the Rényi divergence of order $\alpha$ between distributions $p_1(x)$ and $p_2(x)$:
\begin{align}
D_\alpha(p_1 \| p_2) = \frac{1}{\alpha - 1} \log \mathbb{E}_{x \sim p_2} \left[ \left( \frac{p_1(x)}{p_2(x)} \right)^\alpha \right].
\end{align}
\end{definition}
\begin{remark}[Conversion from RDP to DP \cite{rdp1}]\label{convert-remark}
   If a randomized mechanism $\mathcal{M}$ satisfies $(\alpha, \varepsilon_1)$-RDP for some $\alpha > 1$, then for any $\delta \in (0,1)$, it also satisfies $(\varepsilon_2, \delta)$-DP, where
\begin{align}
\varepsilon_2 = \varepsilon_1 + \log\Big(\frac{\alpha-1}{\alpha}\Big)-
\frac{\log\delta+ \log \alpha}{\alpha - 1}.
\end{align}
\end{remark}

Next, we present a method to compute the RDP leakage.

\begin{definition}[Sampled Gaussian Mechanism (SGM) \cite{rdp2-sampledGauss}]\label{SGMdEF} Let $u$ be a function mapping subsets of $\mathcal{D}$ to $\mathbb{R}^d$. We define the Sampled Gaussian Mechanism parameterized with the sampling rate $0< q\leq 1$ and the noise $\sigma>0$ as
\begin{align}\label{EQ-SGM}
\text{SG}_{q,\sigma}(\mathcal{D})\triangleq u(\mathcal{S})+\mathcal{N}(0,\sigma^2\mathbb{I}^d),
\end{align}
where $\mathcal{S} = \{x\colon x\in \mathcal{D} \textrm{ is sampled with probability } q\}$ is formed by sampling each element of $\mathcal{D}$ independently at random with probability $q$ without replacement, and $\mathcal{N}(0,\sigma^2\mathbb{I}^d)$ is spherical $d$-dimensional Gaussian noise with per-coordinate variance $\sigma^2$.
\end{definition}

\begin{definition}[$\ell_2$-sensitivity]\label{l2sens-def}
Let \( u\) be a function with domain \( \mathcal{D} \) and range \( \mathcal{R} \). The \(\ell_2\)-sensitivity of \( u \) is \( \Delta \) if for any two neighboring datasets $ \mathcal{S} \in \mathcal{D}$ and $\mathcal{S}^\prime  \in \mathcal{D} $, it holds that
\begin{align}
\big \| u(\mathcal{S})- u(\mathcal{S}^\prime) \big\|_2 \le \Delta.
\end{align}
    
\end{definition}

\begin{lemma}[RDP leakage of SGM]\label{RDP-lEAKAGE-LEMMA}
For any integer \( \alpha > 1 \), the SGM defined in Definition~\ref{SGMdEF}, with mapping \( u(\cdot) \) having \(\ell_2\)-sensitivity \(\Delta\), satisfies $\big(\alpha, \rho_{\alpha}(q, \sigma_{\text{eff}})\big)$-RDP, where $\sigma_{\text{eff}} \triangleq \frac{\sigma}{\Delta}$ is the effective noise multiplier, $\rho_{\alpha}(q, \sigma_{\text{eff}})   \triangleq \frac{A_\alpha(q, \sigma_{\text{eff}})}{\alpha-1}$, and
\begin{align}
\hspace{-0.75em}
A_\alpha(q, \sigma_{\text{eff}}) \triangleq \ln \Big[ \sum_{k=0}^{\alpha}  \binom{\alpha}{k} (1-q)^{\alpha-k} q^k \exp{\Big(\frac{k^2-k}{2\sigma_{\text{eff}}^2}\Big)} \Big].
\end{align}
\end{lemma}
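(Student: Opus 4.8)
The plan is to prove the lemma directly from the definition of $(\alpha,\varepsilon)$-RDP by bounding the Rényi divergence $D_\alpha\big(\mathcal{M}(\mathcal{S})\,\|\,\mathcal{M}(\mathcal{S}')\big)$ between the mechanism's outputs on an arbitrary pair of neighboring datasets and showing the bound equals $\rho_\alpha(q,\sigma_{\text{eff}})$. First I would reduce the $d$-dimensional problem to a scalar one. Since neighboring datasets differ in a single element, the only coordinate along which the two output distributions differ is the direction of the sensitivity vector $u(\mathcal{S})-u(\mathcal{S}')$, whose norm is at most $\Delta$. Using the rotational invariance of the spherical noise $\mathcal{N}(0,\sigma^2\mathbb{I}^d)$ together with the additivity of Rényi divergence over independent coordinates—only the coordinate aligned with the sensitivity direction carries any divergence—the computation collapses to comparing two univariate distributions. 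Because the divergence is monotone in the shift magnitude, the worst case is attained when that magnitude equals $\Delta$; normalizing the whole scalar problem by $\Delta$ then replaces $\sigma$ with the effective multiplier $\sigma_{\text{eff}}=\sigma/\Delta$ and fixes the shift to $1$.

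Next I would make the subsampling explicit. Conditioned on whether the differing element is drawn into the subsample (an independent event of probability $q$), the scalar output is distributed either as $\mathcal{N}(0,\sigma_{\text{eff}}^2)$ or as the shifted Gaussian $\mathcal{N}(1,\sigma_{\text{eff}}^2)$. This yields the canonical pair $\mu_0=\mathcal{N}(0,\sigma_{\text{eff}}^2)$ and the mixture $\mu_1=(1-q)\,\mathcal{N}(0,\sigma_{\text{eff}}^2)+q\,\mathcal{N}(1,\sigma_{\text{eff}}^2)$, so it suffices to evaluate $D_\alpha(\mu_1\,\|\,\mu_0)$. By the definition of Rényi divergence, $\exp\big((\alpha-1)D_\alpha(\mu_1\,\|\,\mu_0)\big)=\mathbb{E}_{x\sim\mu_0}\big[(\mu_1(x)/\mu_0(x))^\alpha\big]$, and the likelihood ratio factors as $\mu_1/\mu_0=(1-q)+q\,z$, where $z(x)=\exp\big((2x-1)/(2\sigma_{\text{eff}}^2)\big)$ is the Gaussian likelihood ratio.

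The decisive step, and precisely where the integrality hypothesis $\alpha\in\mathbb{Z},\ \alpha>1$ is used, is the binomial expansion $((1-q)+q z)^\alpha=\sum_{k=0}^{\alpha}\binom{\alpha}{k}(1-q)^{\alpha-k}q^k z^k$, which is a finite sum only for integer $\alpha$. Taking the expectation term by term and invoking the Gaussian moment generating function $\mathbb{E}_{x\sim\mathcal{N}(0,\sigma_{\text{eff}}^2)}[e^{tx}]=e^{t^2\sigma_{\text{eff}}^2/2}$ with $t=k/\sigma_{\text{eff}}^2$ gives $\mathbb{E}_{\mu_0}[z^k]=\exp\big((k^2-k)/(2\sigma_{\text{eff}}^2)\big)$. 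Substituting back reproduces each summand of $A_\alpha(q,\sigma_{\text{eff}})$, and hence $D_\alpha(\mu_1\,\|\,\mu_0)=A_\alpha/(\alpha-1)=\rho_\alpha$.

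I expect the main obstacle to be rigorously justifying the scalar reduction and, in particular, arguing that the mixture-versus-Gaussian orientation $D_\alpha(\mu_1\,\|\,\mu_0)$ dominates the reverse orientation $D_\alpha(\mu_0\,\|\,\mu_1)$. Because the neighboring relation permits both adding and removing an element, both orientations must be controlled, and showing that the closed form upper bounds the maximum of the two requires a separate monotonicity argument on the Rényi divergence of the subsampled Gaussian. By contrast, once the canonical scalar pair is in place, the binomial-plus-MGF computation is entirely mechanical.
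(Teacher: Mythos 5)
The paper does not actually prove this lemma: its ``proof'' is a one-line deferral to the cited reference \cite{rdp2-sampledGauss}, so there is no in-paper argument to compare against. Your proposal is essentially a reconstruction of that reference's proof, and its computational core is correct: the likelihood ratio of the mixture $\mu_1=(1-q)\mathcal{N}(0,\sigma_{\text{eff}}^2)+q\,\mathcal{N}(1,\sigma_{\text{eff}}^2)$ against $\mu_0=\mathcal{N}(0,\sigma_{\text{eff}}^2)$ is indeed $(1-q)+qz$ with $z=\exp\big((2x-1)/(2\sigma_{\text{eff}}^2)\big)$, the binomial expansion is exactly where integrality of $\alpha$ enters, and $\mathbb{E}_{\mu_0}[z^k]=\exp\big((k^2-k)/(2\sigma_{\text{eff}}^2)\big)$ follows from the Gaussian MGF, reproducing $A_\alpha$ term by term.

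There are, however, two genuine gaps. First, the reduction to the canonical scalar pair is not just ``condition on whether the differing element is sampled'': for a general mapping $u$, conditioning on that event still leaves the output a mixture over the random remainder of the subsample, so neither conditional law is a single Gaussian. The correct reduction pairs the components of the two mixtures (which share the mixing distribution over the rest of the subsample) and invokes joint quasi-convexity of the R\'enyi divergence in pairs of mixtures, together with translation and rotation invariance, to pass to the worst-case two-component pair with shift $\Delta$; this is Theorem~4 of the cited reference and your sketch as written elides it. Second, you correctly identify but do not resolve the need to show that $D_\alpha(\mu_1\,\|\,\mu_0)$ dominates $D_\alpha(\mu_0\,\|\,\mu_1)$, which is required because the neighboring relation is symmetric (add or remove an element). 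That domination is a separate, nontrivial result (Theorem~5 of the reference), not a routine monotonicity observation, so as it stands your argument only establishes the bound for one of the two orientations.
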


\begin{proof}
The result can be directly derived from \cite{rdp2-sampledGauss}, and is omitted here for brevity.
\end{proof}

\section{System Model and Problem Formulation} \label{chapt6-systemmodel-problemdef}

In this section, we summarize the FedSGD \cite{mcmahan2017communication} algorithm with OTA uplink transmission, describe how we calculate its overall RDP leakage, and formulate our constrained RDP leakage minimization problem.

\subsection{FedSGD with OTA Aggregation}\label{chapt6-FedSGDwithOTA}

At each training round of FedSGD, the server updates the global model based on signals received from the devices. 
Specifically, round $t$ involves the following steps:
\begin{enumerate}
    \item \textbf{Model broadcast:} The server broadcasts the model parameter vector $\mathbf{w}_t$ to all devices. As commonly considered in the literature, we assume each device perfectly recovers the model.
    \item \textbf{Local gradient computation:} Each device $m$ forms a batch $\mathcal{B}_{m,t}$ according to Poisson sampling.\footnote{FedSGD is not specific to any sampling method, but we will see later that Poisson sampling is needed for tractable RDP analysis.} Specifically, each data point is sampled independently with probability $q_m= \frac{B_m}{n_m}$ from its local dataset $\mathcal{D}_m$, where $B_m$ is the expected batch size. 
    The devices then compute the average of the sample gradients over the batch:
    \begin{align}\label{averaging}
        {\mathbf{g}}_{m,t}= \frac{1}{B_m} \sum_{i\in \mathcal{B}_{m,t}}   {\mathbf{g}}_{m,t,i},
    \end{align}
    where $\mathbf{g}_{m,t,i}\triangleq c_m\nabla l\big(\mathbf{w}_t, (\mathbf{u}_{m,i}, o_{m,i})\big)  \in \mathbb{R}^d$.
    \item \textbf{OTA uplink transmission:} The devices transmit ${\mathbf{g}}_{m,t}$ to the server via OTA aggregation \cite{B1}. Specifically, all devices select a transmit weight $a_{m,t} \in \mathbb{C}$ and send $a_{m,t} {\mathbf{g}}_{m,t}$ to the server simultaneously using the same frequency resource over $d$ consecutive time slots. 
    \item \textbf{Receiver processing and model update at server:}
    Denote the channel coefficient of device $m$ by $h_{m,t} \in \mathbb{C}$. The received signal at the server is
    \begin{align}\label{receivedsignal-chapter6}
        \mathbf{r}_t = \sum_{m=1}^M h_{m,t} a_{m,t} {\mathbf{g}}_{m,t} +\mathbf{n}_t,
    \end{align}
    where $\mathbf{n}_t \sim \mathcal{CN} (0, \sigma_n^2 \mathbb{I}^d)$ is the receiver noise.
    The server scales the received signal and updates the model by applying one-step gradient descent as\footnote{For more efficient transmission, $\mathbf{g}_{m,t}$ can be sent via complex signals using both the real and imaginary parts of the signal. This will not change the fundamental process developed subsequently.}
    \begin{align}\label{ModelUpdate}
                \mathbf{w}_{t+1} = \mathbf{w}_t - \lambda \frac{\mathrm{Re}({\mathbf{ r}_t})}{\sqrt{\eta_t}},
    \end{align}
where $\eta_t \in \mathbb{R}^+$ is the receive scaling factor, $\lambda$ is the learning rate, and $\mathrm{Re}(\cdot)$ returns the real part of a complex variable.   
\end{enumerate}

As in~\cite{19, 21, 28}, we assume that the sample gradient norms are upper bounded by $G$, i.e., $\|\mathbf{g}_{m,t,i}\| \le G, \forall m, t, i$, and we set the device transmit weights proportional to the inverse of the uplink channels.
Thus, $a_{m,t} = \frac{\sqrt{\eta_t}}{M h_{m,t}}, \forall m, \forall t$. 
Then, we can rewrite the server processed received signal in \eqref{ModelUpdate} as
    \begin{align}
        \tilde{\mathbf{r}}_t & \triangleq \frac{\mathrm{Re}(\mathbf{r}_t)}{\sqrt{\eta_t}} =  \underbrace{\frac{1}{M}\sum_{m=1}^M {\mathbf{g}}_{m,t}}_\mathrm{\triangleq \: \mathbf{s}_t } +
         \underbrace{\frac{\mathrm{Re}(\mathbf{n}_t)}{\sqrt{\eta_t}}}_{\triangleq \: \tilde{\mathbf{n}}_t} \label{decmpose},
    \end{align}
    which contains two parts: i) the signal $\mathbf{s}_t$, and ii) the effective noise at the receiver $\tilde{\mathbf{n}}_t \sim \mathcal{N}(\mathbf{0}, \frac{\sigma_n^2}{2\eta_t} \mathbb{I}^d)$.

The average transmit power of device $m$ in round $t$ is 
\begin{align}
    P_{m,t} &= |a_{m,t}|^2 \frac{\mathbb{E}\|{\mathbf{g}}_{m,t} \|^2}{d} = \frac{\eta_t \mathbb{E}\|{\mathbf{g}}_{m,t} \|^2}{d M^2 |h_{m,t}|^2} \nonumber \\ &\overset{(a)}{\le} \frac{\eta_t G^2}{d M^2 |h_{m,t}|^2} \mathbb{E}[\frac{|\mathcal{B}_{m,t}|^2}{B_m^2}] \overset{(b)}{=} \frac{\eta_t G^2 k_m^2}{d M^2 |h_{m,t}|^2} ,
\end{align}
where (a) follows from the fact that based on \eqref{averaging}, ${\mathbf{g}}_{m,t}$ is the average of sample gradients with norms less than or equal to $G$, and thus, by the triangle inequality, we have $\|{\mathbf{g}}_{m,t}\| \le \frac{|\mathcal{B}_{m,t}|G}{B_m}$; and (b) follows from $k_m^2 \triangleq \mathbb{E}[\frac{|\mathcal{B}_{m,t}|^2}{B_m^2}] = 1+ \frac{(1-q_m)}{B_m}$ due to Poisson sampling.

\subsection{RDP Leakage Calculation}

Privacy leakage quantifies the information about the device local data samples that the server can extract from the post-processed received signal $\tilde{\mathbf{r}}_t$.\footnote{Note that the imaginary part of $\mathbf{r}_t$ is pure noise, containing no information.}

\subsubsection{Per-Round RDP Leakage}
By comparing \eqref{decmpose} with Definition~\ref{SGMdEF}, it is evident that for each device \( m \), the vector \( \tilde{\mathbf{r}}_t \) constitutes an SGM with respect to the local dataset \( \mathcal{D}_m \). 
Thus, RDP leakage for each device can be quantified using Lemma~\ref{RDP-lEAKAGE-LEMMA}, by identifying the effective noise multiplier associated with \( \tilde{\mathbf{r}}_t \) for each device.
By Definition~\ref{l2sens-def}, the $\ell_2$ sensitivity of $\mathbf{s}_t$ with respect to the batch of device $m$ is $\Delta_{m,t} = \frac{G}{B_m M}$, since the norm of each sample gradient is upper bounded by $G$, and the aggregation of sample gradients is divided by $M B_m$ according to \eqref{averaging} and \eqref{decmpose}.
Now, given $\Delta_{m,t}$, the effective noise multiplier for device $m$ is 
\begin{align}\label{EffectiveNoiseMultiplier}
    \sigma_{m,t} &= \frac{\frac{\sigma_n}{\sqrt{2 \eta_t}}}{\Delta_{m,t}}  = \frac{MB_m\sigma_n}{\sqrt{2 \eta_t} G}.
\end{align}
Based on Lemma~\ref{RDP-lEAKAGE-LEMMA}, with $\sigma_{m,t}$ in hand, for any order $\alpha$, the RDP leakage for device $m$ in round $t$ is $\rho_{\alpha}(q_m, \sigma_{m,t})$.

\subsubsection{Overall RDP Leakage}
The RDP leakage of a sequence of randomized mechanisms composed sequentially is given by the sum of the RDP leakages of the individual mechanisms \cite{rdp1}. Thus, the overall RDP leakage over $T$ rounds for device $m$ is $\sum_{t=0}^{T-1} \rho_{\alpha}(q_m, \sigma_{m,t})$.

\subsection{Problem Formulation}

We aim to minimize the overall RDP leakage after $T$ training rounds, via optimizing the receive scaling factors $\{\eta_t\}_{t=0}^{T-1}$, while ensuring a certain level of convergence of the global model: 
\begin{subequations}\label{FirstProblem}
    \begin{align}
    \min_{\{ \eta_{t} \}} \quad & \frac{1}{T}\sum_{t=0}^{T-1} \sum_{m=1}^M \rho_{\alpha}(q_m, \sigma_{m,t}) \label{FirstProblem-obj}\\
    \textrm{s.t.} \quad &
    \frac{1}{T} \sum_{t=0}^{T-1} \mathbb{E} \|\nabla f(\mathbf{w}_t) \|^2 \le \gamma,  \label{FirstProblem-constraint} \\
     \quad &
     \frac{\eta_t G^2 k_m^2}{dM^2 |h_{m,t}|^2} \le P_{\max}, \forall m, \forall t,\label{FirstProblem-powercosntraint} \\ \quad &
     \eta_t > 0, \forall t, \label{FirstProblem-eta-poistive}
    \end{align}
\end{subequations}
where the $\mathbb{E}[\cdot]$ is on the randomness of the batch sampling, the noise of sample gradients, and the receiver noise.
Constraint \eqref{FirstProblem-constraint} ensures that the system achieves \(\gamma\)-convergence to a stationary point of the global loss function \( f(\mathbf{w}) \), and constraint \eqref{FirstProblem-powercosntraint} limits the average power consumption of devices. 

\begin{remark}
We note that bounding $\frac{1}{T} \sum_{t=0}^{T-1} \mathbb{E}\|\nabla f(\mathbf{w}_t) \|^2$ in \eqref{FirstProblem-constraint} implies a bound on  $\min_{0\le t\le T-1} \allowbreak \mathbb{E}\|\nabla f(\mathbf{w}_t) \|^2$, which guarantees that at least a model among $\{ \mathbf{w}_t\}$ during the training process will be sufficiently close to a stationary point if \(\gamma\) is chosen to be small enough.   
\end{remark}

Solving \eqref{FirstProblem} presents significant challenges because constraint~\eqref{FirstProblem-constraint} involves the gradient of the global loss function, which is not an explicit function of the optimization variables. Additionally, the global loss function is typically not quantifiable, as it depends on the local data distributions $\{p_m\}$, which are unknown. 
Furthermore, the effective noise multipliers $\{\sigma_{m,t}\}$ in~\eqref{FirstProblem-obj} and the model sequence $\{\mathbf{w}_t\}$ in~\eqref{FirstProblem-constraint} depend on the channel conditions at each round, which are unknown prior to the start of the round. This necessitates the development of an online solution to address unknown future information.
To proceed, we first analyze the convergence of FedSGD with OTA aggregation and substitute \eqref{FirstProblem-constraint} with a more manageable surrogate constraint.

\section{Adaptive Receive Scalar Design}\label{chapt6-Design}

In this section, we first reformulate problem~\eqref{FirstProblem} through the training convergence analysis. We then present an online algorithm to adaptively design the receiver scaling factors $\{ \eta_t \}_{t=0}^{T-1}$ to address the trade-off between privacy and training convergence.

\subsection{Problem Reformulation via Training Convergence Analysis}

Convergence analysis for FedSGD under uniform batch sampling with ideal communication is provided in~\cite{Rethinking}. Here, we extend this analysis to account for Poisson sampling and OTA aggregation transmission. We then use the resulting convergence bound to reformulate problem~\eqref{FirstProblem}.

\subsubsection{Convergence Analysis} We consider the following assumptions on the loss function, which are common in the literature of distributed training and first-order optimization \cite{nips2017, understandingGradientClipping}:
\begin{enumerate}
       \item[\textbf{A1.}] \textbf{Smoothness:} 
       $\forall \mathbf{w}, \mathbf{w}^\prime \in \mathbb{R}^d$,
        \begin{align}
        \hspace{-2em}
            f_m(\mathbf{w}) & \le f_m(\mathbf{w}^\prime)+ \big\langle \nabla f_m(\mathbf{w}^\prime), \mathbf{w}-\mathbf{w}^\prime \big\rangle  + \frac{L}{2}\| \mathbf{w}-\mathbf{w}^\prime\|^2 \nonumber. 
        \end{align}
        \item[\textbf{A2.}] \textbf{Global minimum:} 
        $\exists \mathbf{w}^\star \in \mathbb{R}^d$ such that,
        \begin{align}
            f(\mathbf{w}^\star) = f^\star \le f(\mathbf{w}), \forall \mathbf{w} \in \mathbb{R}^d.
        \end{align}
        \item[\textbf{A3.}] \textbf{Unbiased sample gradients with bounded variance:} 
        $\exists A_1, A_2 \ge 0$, such that $\forall \mathbf{w}_t \in \mathbb{R}^d$,
        \begin{align}
        & \mathbf{g}_{m,t,i} = \nabla f_m(\mathbf{w}_t) +
        \mathbf{z}_{m,t,i},  \: \mathbb{E}\Big[\mathbf{z}_{m,t,i}| \mathbf{w}_t \Big] = 0,  \label{A3-Unbiasedness}\\ & 
         \mathbb{E}\Big[\|\mathbf{z}_{m,t,i}\|^2 | \mathbf{w}_t\Big] \le  A_1 \| \nabla f_m(\mathbf{w}_t)\|^2 + A_2. \label{BoundedVar}
        \end{align}
        \item[\textbf{A4.}] \textbf{Bounded similarity:} 
        $ \exists C_1, C_2 \ge 0$ such that $\forall \mathbf{w} \in \mathbb{R}^d$,
        \begin{align}
        \hspace{-3em}
          \frac{1}{M}\sum_{m=1}^M\| \nabla f_m(\mathbf{w})- \nabla f(\mathbf{w})\|^2 &\le C_1 \| \nabla f(\mathbf{w})\|^2  +C_2.
        \end{align}
\end{enumerate}

In the following, we provide our convergence bound for the global loss function under FedSGD with OTA aggregation.

\begin{theorem}[Training convergence]\label{chapt6-ConvergenceTheorem}
    Assume \textbf{A1}-\textbf{A4} hold, and the learning rate is set as $\lambda \le \frac{1}{4L (C_1+1)(A_1+1)}$. 
    After $T$ rounds of FedSGD described in Section~\ref{chapt6-FedSGDwithOTA}, we have
    \begin{align}\label{ConvBound}
        \frac{1}{T} \sum_{t=0}^{T-1}\mathbb{E}\| \nabla f(\mathbf{w}_t)\|^2 & \le \phi +  \frac{L \lambda}{2T} \sum_{t=0}^{T-1}\frac{ d \sigma_n^2}{\eta_t},
\end{align}
where $\phi$ is defined as
\begin{align}
\phi \triangleq \frac{2\big(f(\mathbf{w}_0)- f^\star\big)}{\lambda T}  + 2L \lambda   \Big(2C_2(A_1+1) + A_2 \Big).
    \end{align}
\end{theorem}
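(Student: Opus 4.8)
The plan is to start from the descent inequality implied by the $L$-smoothness assumption \textbf{A1}, applied to the global loss $f$ (which inherits $L$-smoothness as the average of the $f_m$). Writing the update \eqref{ModelUpdate} as $\mathbf{w}_{t+1}=\mathbf{w}_t-\lambda\tilde{\mathbf{r}}_t$ with $\tilde{\mathbf{r}}_t=\mathbf{s}_t+\tilde{\mathbf{n}}_t$ from \eqref{decmpose}, smoothness yields
\begin{align}
f(\mathbf{w}_{t+1}) \le f(\mathbf{w}_t) - \lambda\big\langle\nabla f(\mathbf{w}_t),\tilde{\mathbf{r}}_t\big\rangle + \frac{L\lambda^2}{2}\|\tilde{\mathbf{r}}_t\|^2. \nonumber
\end{align}
I would then take expectation conditioned on $\mathbf{w}_t$. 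The first-order term is controlled by unbiasedness: combining the Poisson-sampling identity $\mathbb{E}[\mathbf{g}_{m,t}\mid\mathbf{w}_t]=\nabla f_m(\mathbf{w}_t)$ (each point is kept with probability $q_m$ while the sum is normalized by $B_m=q_m n_m$) with \textbf{A3}, and averaging over $m$, gives $\mathbb{E}[\mathbf{s}_t\mid\mathbf{w}_t]=\nabla f(\mathbf{w}_t)$; since $\tilde{\mathbf{n}}_t$ is zero-mean and independent of the signal, the inner-product term collapses to exactly $-\lambda\|\nabla f(\mathbf{w}_t)\|^2$.

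The crux is the second-order term $\mathbb{E}[\|\tilde{\mathbf{r}}_t\|^2\mid\mathbf{w}_t]$. Using independence of the receiver noise, it splits into $\|\nabla f(\mathbf{w}_t)\|^2+\mathbb{E}[\|\mathbf{s}_t-\nabla f(\mathbf{w}_t)\|^2\mid\mathbf{w}_t]+\mathbb{E}\|\tilde{\mathbf{n}}_t\|^2$. The noise term is immediate: since $\tilde{\mathbf{n}}_t\sim\mathcal{N}(\mathbf{0},\frac{\sigma_n^2}{2\eta_t}\mathbb{I}^d)$, we have $\mathbb{E}\|\tilde{\mathbf{n}}_t\|^2=\frac{d\sigma_n^2}{2\eta_t}$, and after the telescoping step this produces exactly the $\frac{L\lambda}{2T}\sum_t\frac{d\sigma_n^2}{\eta_t}$ term. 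The real work is the signal variance $\mathbb{E}\|\mathbf{s}_t-\nabla f\|^2$. Because the $M$ devices sample independently and each $\mathbf{g}_{m,t}$ is conditionally unbiased, it reduces to an average of per-device variances $\mathbb{E}\|\mathbf{g}_{m,t}-\nabla f_m\|^2$. I would compute each by writing $\mathbf{g}_{m,t}=\frac{1}{B_m}\sum_i\xi_{m,i}\mathbf{g}_{m,t,i}$ with independent inclusion indicators $\xi_{m,i}\in\{0,1\}$ of mean $q_m$, and separating two independent sources of randomness: the fluctuation of the random batch size, contributing a term proportional to $\frac{1-q_m}{B_m}\|\nabla f_m\|^2$ (precisely where the factor $k_m^2=1+\frac{1-q_m}{B_m}$ originates), and the stochastic-gradient noise, bounded via \textbf{A3} by $\frac{1}{B_m}(A_1\|\nabla f_m\|^2+A_2)$.

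With a per-device bound of the form $(\text{const}_m)\|\nabla f_m(\mathbf{w}_t)\|^2+(\text{const}_m')$ in hand, the next step is to convert local gradient norms into the global one: apply $\|\nabla f_m\|^2\le 2\|\nabla f_m-\nabla f\|^2+2\|\nabla f\|^2$ and invoke \textbf{A4} to get $\frac{1}{M}\sum_m\|\nabla f_m\|^2\le 2(C_1+1)\|\nabla f\|^2+2C_2$. This produces a total variance bound of the form $a\,\|\nabla f(\mathbf{w}_t)\|^2+b$ with $a$ of order $(A_1+1)(C_1+1)$ and $b$ of order $C_2(A_1+1)+A_2$. Substituting back, the second-order term contributes $\frac{L\lambda^2}{2}(1+a)\|\nabla f\|^2$, and the learning-rate condition $\lambda\le\frac{1}{4L(C_1+1)(A_1+1)}$ is exactly what forces the coefficient of $\|\nabla f(\mathbf{w}_t)\|^2$ on the right to be at most $-\frac{\lambda}{2}$. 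I would then rearrange to isolate $\frac{\lambda}{2}\mathbb{E}\|\nabla f(\mathbf{w}_t)\|^2$, sum over $t=0,\dots,T-1$ so the $\mathbb{E}[f(\mathbf{w}_t)]$ terms telescope, lower-bound $f(\mathbf{w}_T)\ge f^\star$ by \textbf{A2}, and divide by $\frac{\lambda T}{2}$ to recover \eqref{ConvBound} with $\phi$ as defined.

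The step I expect to be the main obstacle is the variance computation of the Poisson-sampled aggregate: correctly decoupling the random-batch-size fluctuation (which yields the $\frac{1-q_m}{B_m}$ term) from the gradient noise, and then carefully tracking all constants through the \textbf{A4} conversion so that they line up with the learning-rate threshold and reproduce the precise constants in $\phi$. The descent-lemma scaffolding and the telescoping are standard; the Poisson-sampling variance and its interplay with the similarity assumption are where the care is needed.
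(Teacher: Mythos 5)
Your scaffolding — descent lemma from \textbf{A1}, exact cancellation of the first-order term via Poisson-sampling unbiasedness, a bound on the second moment of the update of the form $a\|\nabla f(\mathbf{w}_t)\|^2+b+\frac{d\sigma_n^2}{2\eta_t}$, then the learning-rate condition, telescoping, and \textbf{A2} — is exactly the paper's structure, and the first-order and noise computations coincide with the paper's Lemmas in the appendix. Where you genuinely diverge is the treatment of $\mathbb{E}[\|\mathbf{s}_t\|^2\mid\mathbf{w}_t]$. The paper never forms the bias--variance decomposition $\|\nabla f\|^2+\mathbb{E}\|\mathbf{s}_t-\nabla f\|^2$; instead it bounds $\mathbb{E}\|\mathbf{s}_t\|^2$ directly with the crude inequality $\|\sum_{j=1}^J\mathbf{y}_j\|^2\le J\sum_j\|\mathbf{y}_j\|^2$ applied over both $m$ and $i$, together with $\mathbb{E}[|\mathcal{B}_{m,t}|^2/B_m^2]=1+\frac{1-q_m}{B_m}\le 2$, and then \textbf{A4}. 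Your route computes the Poisson-sampling variance exactly (inclusion indicators $\xi_{m,i}$, batch-size fluctuation giving the $\frac{1-q_m}{B_m}$ term, gradient noise giving $\frac{1}{B_m}(A_1\|\nabla f_m\|^2+A_2)$, and a $\frac{1}{M^2}$ across devices), which yields strictly smaller constants $a,b$ and hence still implies the stated bound after relaxation; the paper's route is looser but needs fewer probabilistic hypotheses. The one caveat in your version: dropping the cross terms in $\mathbb{E}\|\mathbf{g}_{m,t}-\nabla f_m\|^2$ and in $\mathbb{E}\|\mathbf{s}_t-\nabla f\|^2$ requires the per-sample noises $\mathbf{z}_{m,t,i}$ to be uncorrelated across $i$ and across devices, which \textbf{A3} does not literally assert (it only gives conditional zero mean and bounded second moment); the paper's norm-inequality route sidesteps this entirely. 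If you either add that (standard, physically reasonable) uncorrelatedness assumption or fall back on $\|\frac{1}{B_m}\sum_{i\in\mathcal{B}_{m,t}}\mathbf{z}_{m,t,i}\|^2\le\frac{|\mathcal{B}_{m,t}|}{B_m^2}\sum_{i\in\mathcal{B}_{m,t}}\|\mathbf{z}_{m,t,i}\|^2$ for that one piece, your argument is complete and reproduces the theorem.
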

\begin{proof}
    See Appendix~\ref{appendix-6A}.
\end{proof}
Our bound differs slightly from those in prior works~\cite{19,21,28}, as it accounts for Poisson sampling and is derived under weaker assumptions. Specifically, unlike previous bounds that assume strong convexity or the Polyak-Łojasiewicz condition, our analysis does not require convexity of the loss function.

\subsubsection{Problem Reformulation}

To reformulate problem~\eqref{FirstProblem}, we apply a change of variable and define
\begin{align}
x_t \triangleq \frac{\eta_t}{h_{{\min},t}^2},
\end{align}
where $h_{{\min},t} \triangleq \min_{m} \frac{|h_{m,t}|}{k_m}$. We further define $x_{\max} \triangleq \frac{P_{\max}dM^2}{G^2}$.
Then, constraints ~\eqref{FirstProblem-powercosntraint} and ~\eqref{FirstProblem-eta-poistive} convert to 
\begin{align}
0 < x_t \le x_{\max}, \forall t.
\end{align} 
Moreover, the effective noise multiplier in \eqref{EffectiveNoiseMultiplier} can be written in terms of $x_t$ as
\begin{align}\label{EffectiveNoiseMultiplier-2}
    \sigma_{m,t} = \frac{MB_m\sigma_n}{\sqrt{2 x_t} G h_{{\min}, t}}.
\end{align}

To deal with constraint~\eqref{FirstProblem-constraint}, first we rewrite the bound in \eqref{ConvBound} in terms of $x_t$ as follows:
\begin{align}\label{NewConvBound}
        & \frac{1}{T}  \sum_{t=0}^{T-1} \mathbb{E}\| \nabla f(\mathbf{w}_t)\|^2  \le  \phi + \frac{L\lambda }{2T} \sum_{t=0}^{T-1}\frac{d \sigma_n^2}{h_{{\min},t}^2 x_{\max}} \nonumber  \\ & \qquad \qquad \qquad \qquad+ \frac{L \lambda }{2T} \sum_{t=0}^{T-1} \frac{d \sigma_n^2}{h_{{\min},t}^2}\Big(\frac{1}{x_t}-\frac{1}{x_{\max}} \Big).
\end{align}
We note that the second term of the upper bound in \eqref{NewConvBound} does not depend on the decision variables \(\{x_{t}\}\). For simplicity, we define these terms as 
\begin{align}
\phi^\prime \triangleq \phi + \frac{L \lambda }{2T} \sum_{t=0}^{T-1}\frac{d \sigma_n^2}{h_{{\min},t}^2 x_{\max}}.
\end{align}

We replace the left-hand side (LHS) of \eqref{FirstProblem-constraint} with its upper bound given in \eqref{NewConvBound}.
To ensure that \(\frac{1}{T} \sum_{t=0}^{T-1} \| \nabla f(\mathbf{w}_t)\|^2\) is bounded by \(\gamma\), it suffices to bound the right-hand side (RHS) of \eqref{NewConvBound} by the same amount. Since the first two terms of RHS of \eqref{NewConvBound} are constant, restricting it by \(\gamma\) implies a bound on the third term \(\frac{1}{T} \sum_{t=0}^{T-1}\frac{d \sigma_n^2}{h_{{\min},t}^2 } (\frac{1}{x_t}-\frac{1}{x_{\max}})\) by \(\nu\), where \(\nu = \frac{ 2(\gamma - \phi^{\prime}) }{\lambda L}\). Hence, we reformulate problem \eqref{FirstProblem} as
\begin{subequations}\label{OriginalProblem}
    \begin{align}
    \min_{\{ x_{t} \}} \quad & \frac{1}{T} \sum_{t=0}^{T-1} \sum_{m=1}^M \rho_{\alpha}(q_m, \sigma_{m,t}) \label{OriginalProblem-obj}\\
    \textrm{s.t.} \quad &
    \frac{1}{T} \sum_{t=0}^{T-1}\frac{d \sigma_n^2}{h_{{\min},t}^2 } \Big(\frac{1}{x_t}-\frac{1}{x_{\max}} \Big) \le \nu, \label{OriginalProblem-constraint} \\ \quad &
    0 < x_t \le x_{\max}, \forall t, \label{OriginalProblem-PowerConstraint} 
    \end{align}
\end{subequations}
where $\nu$ replaces $\gamma$ as the hyperparameter to tune the trade-off between training convergence and privacy.

The above problem is still difficult to handle due to the presence of the long-term objective and constraint, and the channel coefficients $\{h_{m,t}\}$ are unknown prior to the start of the $t$-th round. 
Next, we propose a novel algorithm to solve the problem in an online manner and provide bounds for both its constraint violation and its dynamic regret.

\subsection{Proposed Algorithm}

We start with a conventional virtual queue to keep track of the violation of constraint \eqref{OriginalProblem-constraint}, which is denoted by $Q_t \in \mathbb{R}$ with $Q_0  = 0$. In each round $t$, the server updates the virtual queue as
\begin{align}\label{VirtualQueueUpdate}
    Q_{t+1} = \max\Big\{Q_{t}+ \frac{ d \sigma_n^2}{h_{{\min},t}^2} \Big( \frac{1}{x_t}-\frac{1}{x_{\max}}\Big)- \nu, 0 \Big\}.
\end{align} 

If we directly apply standard Lyapunov optimization~\cite{neelybook} to solve problem~\eqref{OriginalProblem}, the decision variable at each round would be obtained by solving a per-round optimization problem with objective $V \sum_{m=1}^M \rho_{\alpha}(q_m, \sigma_{m,t}) + Q_t \frac{d \sigma_n^2}{h_{{\min},t}^2} \left( \frac{1}{x_t} - \frac{1}{x_{\max}} \right).$
Minimizing this objective is equivalent to minimizing an upper bound on the drift-plus-penalty, if the constraint function is \textit{bounded} within the feasible set \cite{neelybook}.
However, this boundedness assumption does not hold for problem~\eqref{OriginalProblem}, as the LHS of constraint~\eqref{OriginalProblem-constraint} can grow arbitrarily large when \( x_t \) approaches zero. In fact, it is easy to see that directly applying the standard Lyapunov method leads to infinite constraint violation.

This motivates us to modify the standard Lyapunov method by introducing an \textit{additional term} into the per-round objective. This modification prevents the solution from collapsing to zero and avoids unbounded constraint violations. As we will show in Section~\ref{chapt6-theoreticalanalysis}, the inclusion of this additional term makes the per-round optimization problem equivalent to minimizing an upper bound on the drift-plus-penalty, even in the presence of unbounded constraints. This enables us to establish performance guarantees.

Specifically, we consider a different form of the per-round optimization as follows.
In round $t$, the server solves an optimization problem to design its receiver scaling factor as
\begin{subequations}\label{PerRoundProposedOptimization}
    \begin{align}
    \min_{ x_t } \quad &   V \sum_{m=1}^M \rho_{\alpha}(q_m, \sigma_{m,t}) + Q_t \frac{ d \sigma_n^2}{h_{{\min},t}^2} \Big( \frac{1}{x_t}-\frac{1}{x_{\max}}\Big) \nonumber \\ & \qquad+ \frac{1}{2}\big(\frac{ d \sigma_n^2}{h_{{\min},t} ^2}\big)^2 \Big( \frac{1}{x_t}-\frac{1}{x_{\max}}\Big)^2 \label{PerRound-Objective}\\
    \textrm{s.t.} \quad &
    0 < x_t  \le x_{\text{max}}. \label{PerRound-Constraint}
    \end{align}
\end{subequations}
where $V \in \mathbb{R}^+$ is a predefined constant. Note that $\rho(q_m, \sigma_{m,t})$ depends on $x_{t}$ through \eqref{EffectiveNoiseMultiplier-2}.

Problem~\eqref{PerRoundProposedOptimization} is a single-variable optimization problem. The following proposition establishes that it is convex for integer values of $\alpha$.

\begin{proposition}\label{OptimalSolution}
For any integer $\alpha$, problem~\eqref{PerRoundProposedOptimization} is convex.
\end{proposition}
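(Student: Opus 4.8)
The plan is to establish convexity directly from the definition: the feasible set $(0, x_{\max}]$ in \eqref{PerRound-Constraint} is a convex interval, so it remains to show that the objective \eqref{PerRound-Objective} is a convex function of the single variable $x_t$. Since a sum of convex functions is convex and $V > 0$, I would treat the three additive terms separately, namely the privacy term $\sum_{m=1}^M \rho_\alpha(q_m, \sigma_{m,t})$, the linear-drift term $Q_t \frac{d\sigma_n^2}{h_{\min,t}^2}(\frac{1}{x_t} - \frac{1}{x_{\max}})$, and the quadratic term $\frac{1}{2}(\frac{d\sigma_n^2}{h_{\min,t}^2})^2(\frac{1}{x_t} - \frac{1}{x_{\max}})^2$.

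The two drift terms are routine. For the linear-drift term I would use that $\frac{1}{x_t}$ is convex on $(0,\infty)$ together with $Q_t \ge 0$, which holds by the virtual-queue update \eqref{VirtualQueueUpdate}; the term is then a nonnegative multiple of a convex function plus a constant. For the quadratic term, setting $c \triangleq \frac{d\sigma_n^2}{h_{\min,t}^2} > 0$ and $g(x_t) \triangleq (\frac{1}{x_t} - \frac{1}{x_{\max}})^2$, a short computation gives $g''(x_t) = \frac{2}{x_t^3}\big(\frac{3}{x_t} - \frac{2}{x_{\max}}\big)$, and on $(0, x_{\max}]$ we have $\frac{3}{x_t} \ge \frac{3}{x_{\max}} > \frac{2}{x_{\max}}$, so $g'' > 0$ and the term is convex.

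The main obstacle is the privacy term, which I would handle via a log-sum-exp reformulation. The key observation is that, by \eqref{EffectiveNoiseMultiplier-2}, the quantity $\frac{1}{2\sigma_{m,t}^2}$ is \emph{linear} in $x_t$: indeed $\frac{1}{2\sigma_{m,t}^2} = b_m x_t$ with $b_m \triangleq \frac{G^2 h_{\min,t}^2}{M^2 B_m^2 \sigma_n^2} > 0$. Substituting into the expression for $A_\alpha$ from Lemma~\ref{RDP-lEAKAGE-LEMMA} yields
\begin{align}
A_\alpha(q_m, \sigma_{m,t}) = \ln\Big[\textstyle\sum_{k=0}^\alpha c_{m,k}\, e^{(k^2-k)\,b_m x_t}\Big],
\end{align}
where $c_{m,k} \triangleq \binom{\alpha}{k}(1-q_m)^{\alpha-k} q_m^k \ge 0$. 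Because $\alpha$ is an integer, the sum runs over integer indices $k = 0, \ldots, \alpha$, for which $k^2 - k = k(k-1) \ge 0$; hence each exponent $(k^2-k)b_m x_t$ is affine in $x_t$. Each summand $c_{m,k}\,e^{(k^2-k)b_m x_t}$ is therefore log-convex (with zero-weight summands simply dropped), and a sum of log-convex functions is log-convex, so $A_\alpha(q_m, \sigma_{m,t})$ is convex in $x_t$. Equivalently, writing $c_{m,k}\,e^{(k^2-k)b_m x_t} = e^{(k^2-k)b_m x_t + \ln c_{m,k}}$ exhibits the expression as the convex log-sum-exp function composed with affine maps. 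Scaling by $\frac{1}{\alpha-1} > 0$ shows each $\rho_\alpha(q_m, \sigma_{m,t})$ is convex, and summing over $m$ preserves convexity.

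Combining the three convex terms establishes convexity of \eqref{PerRound-Objective} over the convex feasible set, completing the argument. The one subtlety worth flagging is the role of the integer assumption on $\alpha$: it is precisely what makes the finite closed-form sum in Lemma~\ref{RDP-lEAKAGE-LEMMA} valid and thereby exposes the log-sum-exp structure that drives the convexity proof.
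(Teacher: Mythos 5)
Your proof is correct and follows essentially the same route as the paper: the feasible interval is convex, the objective splits into the same three terms, and the privacy term is shown convex by exposing the log-sum-exp-of-affine structure that the integer-$\alpha$ closed form provides. The only differences are cosmetic refinements --- you verify the quadratic term by direct second-derivative computation where the paper invokes the composition rule, and you explicitly record that $Q_t \ge 0$ (guaranteed by the queue update \eqref{VirtualQueueUpdate}) so that the drift term is a nonnegative multiple of $1/x_t$, a point the paper leaves implicit.
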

\begin{proof}
Since the constraint~\eqref{PerRound-Constraint} is linear in $x_t$, it suffices to show that the objective function in~\eqref{PerRound-Objective} is convex in $x_t$. 
The objective function has three terms. The first term is $\sum_{m=1}^M \rho_\alpha(q_m, \sigma_{m,t})$. For integer $\alpha$, $\rho_\alpha(q, \sigma)$ is defined in Lemma~\ref{RDP-lEAKAGE-LEMMA}. 
Plugging in $\sigma_{m,t}$ in terms of $x_t$ using~\eqref{EffectiveNoiseMultiplier-2}, we observe that $\rho(q_m, \sigma_{m,t})$ becomes a \texttt{logsumexp} function of $x_t$, which is a known convex function. Thus, the first term of the objective in~\eqref{PerRoundProposedOptimization} is a sum of convex functions across devices, and hence is convex.
The second term of the objective function involves $\frac{1}{x_t}$, which is convex over the feasible set as $x_t > 0$. The third term involves $\left( \frac{1}{x_t} - \frac{1}{x_{\max}} \right)^2$, which is a composition of two functions: $g_1(x) = x^2$ and $g_2(x) = \frac{1}{x} - \frac{1}{x_{\max}}$. Both $g_1(x)$ and $g_2(x)$ are convex, and $g_1(x)$ is increasing over the feasible set since $x_t \le x_{\max}$. Therefore, by the composition rule for convex functions, the third term is also convex over the feasible set. 
Hence, the overall objective function is convex, and so is the optimization problem.
\end{proof}

Based on Proposition~\ref{OptimalSolution}, for any integer $\alpha$, problem~\eqref{PerRoundProposedOptimization} can be solved by setting the derivative of the objective function to zero and identifying its root. If the root lies within the interval \( (0, x_{\max}] \), it corresponds to the optimal solution; otherwise, the optimal solution is given by \( x_{\max} \). However, due to the complexity of the objective function, finding a closed-form expression for the root is not feasible. 
Therefore, we employ the bisection algorithm, based on the derivative of the objective function,  to numerically compute the point at which the derivative of the objective function equals zero.
The detailed procedure is standard and is omitted to avoid redundancy.

We refer to our proposed algorithm, which adaptively designs the receive scaling factor by solving \eqref{PerRoundProposedOptimization} and updating the virtual queue based on \eqref{VirtualQueueUpdate}, as Adaptive receive Scaling (AdaScale). It is summarized in Algorithm~\ref{receivescalerdesign--algo}.

\begin{algorithm}[!t]
\caption{AdaScale at round $t$}
\label{receivescalerdesign--algo}
\textbf{Inputs}:  $\sigma_n$, $\{ q_m \}$, $\{ B_m \}$, $G$, $M$, $d$, $P_{\max}$. \\
\textbf{Output}: $\eta_t$ 
\begin{algorithmic}[1] 
\vspace{0.1 em}
    \STATE Server solves~\eqref{PerRoundProposedOptimization} using bisection search. 
    \vspace{0.15em}
    \STATE Server updates its virtual queue based on \eqref{VirtualQueueUpdate}.
    \vspace{0.15em}
    \STATE Server sets $\eta_t = x_t \min_{m} \frac{|h_{m,t}|^2}{k_m^2}.$
    \vspace{0.15em}
    \STATE Server transmits $\eta_t$ to the devices; devices use it to set their transmit weights.
\end{algorithmic}
\end{algorithm}

\begin{remark}
Throughout this work, we consider integer values of \( \alpha \). Nevertheless, since the RDP leakage is a monotonically increasing function of \( \alpha \), upper and lower bounds on the leakage for a non-integer \( \alpha \) can be obtained by evaluating the RDP expression at the closest integers. 
\end{remark}

\subsection{Computational Complexity}
Solving \eqref{PerRoundProposedOptimization} using the bisection algorithm requires evaluating the derivative of \eqref{PerRound-Objective}, which has a constant computational cost of $\mathcal{O}(1)$. To reach a solution within a distance of $\tau$ from the optimum, the algorithm requires at most $\log_2(\frac{x_{\max}}{\tau})$ iterations. Therefore, in each training round, the overall computational complexity for obtaining a solution within $\tau$-vicinity of the optimum is $\mathcal{O}\big(\log_2(\frac{x_{\max}}{\tau})\big)$.

Despite the low computational complexity of this algorithm, we next show that it has strong performance guarantees, in terms of constraint violation and dynamic regret.

\section{Theoretical Performance Analysis}\label{chapt6-theoreticalanalysis}

We analyze the performance of AdaScale in this section. We note that even though our analysis uses the familiar notion of drift, it is substantially different from the conventional Lyapunov stability analysis, and it leads to novel constraint violation and dynamic regret bounds. To begin, let \( \hat{x}_{t} \) denote the optimization variable at round \( t \) obtained using AdaScale, and let \( \hat{\sigma}_{m,t} \) denote the corresponding effective noise multiplier, obtained by substituting \( \hat{x}_{t} \) for \( x_t \) in \eqref{EffectiveNoiseMultiplier-2}.

\subsection{Upper Bound on $R$-Slot Drift}

For any positive integer $R \le T$, we define the $R$-slot drift of the virtual queue as
\begin{align}\label{R-slotDrift}
    \Delta_R(t) & \triangleq \frac{1}{2}Q_{t+R}^2 - \frac{1}{2}Q_{t}^2.
\end{align}
Using \eqref{R-slotDrift} and noting that the initial value of the queue is set to zero, we can rewrite the $R$-slot drift at time $t=0$ as $
\Delta_R(0)  = \frac{1}{2} Q_{R}^2,$
which implies
\begin{align}\label{QueueUpperbound}
    Q_{R} & = \sqrt{2 \Delta_R(0)}.
\end{align}

We start with an upper bound on the one-slot drift in the lemma below.
\begin{lemma}[One-slot drift bound]
    The one-slot drift for AdaScale is upper bounded by
\begin{align}\label{DrfitPlusPenalty}
         \Delta_1(t) & \le   Q_t \frac{ d \sigma_n^2}{h_{{\min},t}^2} \Big( \frac{1}{\hat{x}_t}-\frac{1}{x_{\max}}\Big) \nonumber \\ & \quad + \frac{1}{2}\big(\frac{ d \sigma_n^2}{h_{{\min},t} ^2}\big)^2 \Big( \frac{1}{\hat{x}_t}-\frac{1}{x_{\max}}\Big)^2 + \frac{1}{2} \nu^2.
    \end{align} 
\end{lemma}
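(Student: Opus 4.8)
The plan is to work directly from the definition of the one-slot drift, $\Delta_1(t) = \frac{1}{2}Q_{t+1}^2 - \frac{1}{2}Q_t^2$, and to substitute the virtual queue update \eqref{VirtualQueueUpdate} evaluated at the AdaScale decision $\hat{x}_t$. To streamline the algebra I would introduce the shorthand $\zeta_t \triangleq \frac{d\sigma_n^2}{h_{{\min},t}^2}\big(\frac{1}{\hat{x}_t} - \frac{1}{x_{\max}}\big)$ for the instantaneous constraint function, so that the update reads $Q_{t+1} = \max\{Q_t + \zeta_t - \nu, 0\}$ and the target inequality becomes $\Delta_1(t) \le Q_t \zeta_t + \frac{1}{2}\zeta_t^2 + \frac{1}{2}\nu^2$.

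The first key step is the standard squared-queue relaxation: since $(\max\{a,0\})^2 \le a^2$ for every real $a$, we obtain $Q_{t+1}^2 \le (Q_t + \zeta_t - \nu)^2$. Halving and subtracting $\frac{1}{2}Q_t^2$ then gives
\begin{align}
\Delta_1(t) \le Q_t(\zeta_t - \nu) + \frac{1}{2}(\zeta_t - \nu)^2,
\end{align}
which I would expand into $Q_t\zeta_t + \frac{1}{2}\zeta_t^2 + \frac{1}{2}\nu^2 - \nu Q_t - \nu\zeta_t$.

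The final step is to discard the two nonpositive cross terms $-\nu Q_t$ and $-\nu\zeta_t$, which recovers exactly the claimed bound. This requires only checking three sign conditions: $\nu > 0$, since it is a positive convergence hyperparameter; $Q_t \ge 0$, which is immediate from $Q_0 = 0$ and the max operation in \eqref{VirtualQueueUpdate}; and $\zeta_t \ge 0$, the sole nonroutine observation, which follows from the feasibility constraint $\hat{x}_t \le x_{\max}$ in \eqref{PerRound-Constraint}, as it yields $\frac{1}{\hat{x}_t} \ge \frac{1}{x_{\max}}$ and hence $\zeta_t \ge 0$. I do not anticipate a genuine obstacle here, since the manipulation is elementary. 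The one subtlety worth flagging is that the resulting bound is intentionally looser than the tight expansion: its purpose is to make the right-hand side coincide with the second and third terms of the per-round objective \eqref{PerRound-Objective} evaluated at $\hat{x}_t$, plus the constant $\frac{1}{2}\nu^2$. This alignment is precisely what will let the later analysis relate the accumulated drift to the objective that AdaScale minimizes in each round, and it is what distinguishes the argument from conventional Lyapunov stability analysis where the constraint function is assumed bounded.
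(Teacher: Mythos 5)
Your proposal is correct and follows essentially the same route as the paper: bound $Q_{t+1}^2$ by $(Q_t + \zeta_t - \nu)^2$ via the max/absolute-value relaxation, expand, and drop the nonpositive cross terms $-\nu Q_t$ and $-\nu\zeta_t$ using $Q_t \ge 0$ and $\hat{x}_t \le x_{\max}$. The sign checks you flag are exactly the ones the paper relies on, so there is nothing to add.
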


\begin{proof}
Based on the queue update equation in \eqref{VirtualQueueUpdate}, we have $Q_{t+1} \le \big|Q_{t}+ \frac{ d \sigma_n^2}{h_{{\min},t}^2} \big( \frac{1}{\hat{x}_t}-\frac{1}{x_{\max}}\big)- \nu \big|$. Squaring both sides of this inequality, we obtain
    \begin{align} \label{one}     
        Q_{t+1}^2 & \le Q_{t}^2 + 2Q_t \Big( \frac{ d \sigma_n^2}{h_{{\min},t}^2} \big( \frac{1}{\hat{x}_t}-\frac{1}{x_{\max}}\big) -\nu \Big) \nonumber \\ & \qquad \qquad+  \Big(\frac{ d \sigma_n^2}{h_{{\min},t}^2} \Big( \frac{1}{\hat{x}_t}-\frac{1}{x_{\max}}\big)-\nu \Big)^2.
    \end{align}
    Rearranging the terms in \eqref{one}, we have
    \begin{align}
        \Delta_1(t) &\le \frac{ d \sigma_n^2}{h_{{\min},t}^2} \big( \frac{1}{\hat{x}_t}-\frac{1}{x_{\max}}\big) \big(Q_t-\nu \big)  \nonumber \\ & \quad+ \frac{1}{2}\big(\frac{ d \sigma_n^2}{h_{{\min},t} ^2}\big)^2 \Big( \frac{1}{\hat{x}_t}-\frac{1}{x_{\max}}\Big)^2+ \frac{1}{2} \nu^2- \nu Q_t,
    \end{align}
    where further upper bounding by disregarding the negative terms and noting $0 < \hat{x}_t \le x_{\max}$, leads to the upper bound given in \eqref{DrfitPlusPenalty}.
\end{proof}

We sum both sides of \eqref{DrfitPlusPenalty} over $t$ from $0$ to $R-1$ to obtain
\begin{align}\label{R-SlotDriftUpperBound}
      \Delta_R(0) 
     & \le   \sum_{t=0}^{R-1} Q_t \frac{ d \sigma_n^2}{h_{{\min},t}^2} \Big( \frac{1}{\hat{x}_t}-\frac{1}{x_{\max}}\Big) \nonumber \\ & \quad + \frac{1}{2} \sum_{t=0}^{R-1} \big(\frac{ d \sigma_n^2}{h_{{\min},t} ^2}\big)^2 \Big( \frac{1}{\hat{x}_t}-\frac{1}{\hat{x}_{\max}}\Big)^2   + \frac{R \nu^2}{2}.
\end{align}

\subsection{Upper Bound on Virtual Queue}

\begin{lemma}[Virtual queue upper bound]\label{QueueUpperBoundLemma}
Under AdaScale, the virtual queue is upper bounded by
    \begin{align}\label{QueueUpperbound-3}
    Q_t\le Q_T^{\max}, 0\le t\le T,
\end{align}
where 
\begin{align}
Q_T^{\max} & \triangleq \left( 2V \sum_{t=0}^{T-1}\sum_{m=1}^M \rho_\alpha(q_m, \sigma_{m,t}^{\min})  + T \nu^2 \right)^{\frac{1}{2}}, \\  {\sigma}^{\min}_{m,t} & \triangleq \frac{\sigma_n M B_m}{G h_{{\min},t} \sqrt{2{x}}_{\max}}.
\end{align}
    
\end{lemma}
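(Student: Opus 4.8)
The plan is to chain the one-slot drift bound from the preceding lemma with the optimality of the AdaScale solution $\hat{x}_t$, and then telescope. First, I would start from the one-slot drift bound \eqref{DrfitPlusPenalty} and observe that its first two terms (the linear and quadratic terms in $\frac{1}{\hat{x}_t}-\frac{1}{x_{\max}}$) are \emph{exactly} the second and third terms of the per-round objective \eqref{PerRound-Objective} evaluated at $\hat{x}_t$. This match is by design: the quadratic penalty was inserted into \eqref{PerRound-Objective} precisely so that it reproduces the quadratic term appearing in the squared-queue drift, which is what lets the argument close.

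Second, I would exploit the optimality of $\hat{x}_t$. Since $\hat{x}_t$ minimizes \eqref{PerRound-Objective} over the feasible set $(0,x_{\max}]$, its objective value is no larger than the value at the specific feasible point $x_t = x_{\max}$. At $x_t = x_{\max}$ the linear and quadratic terms both vanish because $\frac{1}{x_{\max}}-\frac{1}{x_{\max}}=0$, and by \eqref{EffectiveNoiseMultiplier-2} the effective noise multiplier collapses to $\sigma_{m,t}^{\min} = \frac{\sigma_n M B_m}{G h_{{\min},t}\sqrt{2 x_{\max}}}$, so the objective at $x_{\max}$ equals $V\sum_{m}\rho_\alpha(q_m,\sigma_{m,t}^{\min})$. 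Rearranging this inequality bounds the linear-plus-quadratic drift terms by $V\sum_m\big[\rho_\alpha(q_m,\sigma_{m,t}^{\min}) - \rho_\alpha(q_m,\hat{\sigma}_{m,t})\big]$.

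Third, I would substitute this back into \eqref{DrfitPlusPenalty} and discard the nonnegative term $V\sum_m \rho_\alpha(q_m,\hat{\sigma}_{m,t})\ge 0$ (RDP leakage is a Rényi divergence bound and hence nonnegative), obtaining the clean per-round estimate $\Delta_1(t)\le V\sum_m\rho_\alpha(q_m,\sigma_{m,t}^{\min}) + \tfrac{1}{2}\nu^2$. Summing over $s=0,\dots,t-1$ telescopes the drift, and with $Q_0=0$ this gives $\tfrac{1}{2}Q_t^2 \le V\sum_{s=0}^{t-1}\sum_m \rho_\alpha(q_m,\sigma_{m,s}^{\min}) + \tfrac{t}{2}\nu^2$. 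Finally, since every summand and $\nu^2$ are nonnegative, the right-hand side is nondecreasing in $t$, so replacing $t$ by $T$ upper bounds it by $\tfrac{1}{2}(Q_T^{\max})^2$, which yields $Q_t \le Q_T^{\max}$ for all $0\le t\le T$.

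I expect the only delicate point to be the comparison step: the choice of $x_{\max}$ as the reference feasible point is exactly what annihilates both penalty terms, and one must verify that evaluating $\rho_\alpha(q_m,\sigma_{m,t})$ at $x_t = x_{\max}$ reproduces $\rho_\alpha(q_m,\sigma_{m,t}^{\min})$, which is immediate from the definition of $\sigma_{m,t}^{\min}$ through \eqref{EffectiveNoiseMultiplier-2}. Everything after that, namely the telescoping and the monotonicity extension to $T$, is routine.
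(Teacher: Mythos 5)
Your proposal is correct and follows essentially the same route as the paper: compare the per-round objective at $\hat{x}_t$ against the feasible point $x_{\max}$ (which annihilates the linear and quadratic penalty terms and turns $\sigma_{m,t}$ into $\sigma_{m,t}^{\min}$), drop the nonnegative RDP term, telescope the drift from $Q_0=0$, and use monotonicity of the bound in $t$ to extend to $Q_T^{\max}$. The only cosmetic difference is that you perform the optimality comparison round by round before summing, whereas the paper sums the drift-plus-penalty over $R$ slots first and then compares; these are equivalent since the per-round problems are separable.
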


\begin{proof}
From \eqref{R-SlotDriftUpperBound}, we have
\begin{align}\label{SumDriftPenalty}
    & V \sum_{t=0}^{R-1} \sum_{m=1}^M   \rho_{\alpha}(q_m, \hat{\sigma}_{m,t}) + \Delta_R(0) 
     \le \nonumber \\ & V \sum_{t=0}^{R-1} \sum_{m=1}^M   \rho_{\alpha}(q_m, \hat{\sigma}_{m,t})  +  \frac{1}{2} \sum_{t=0}^{R-1} \big(\frac{ d \sigma_n^2}{h_{{\min},t} ^2}\big)^2 \Big( \frac{1}{\hat{x}_t}-\frac{1}{x_{\max}}\Big)^2 \nonumber \\
    & \qquad \qquad \qquad+    \sum_{t=0}^{R-1} Q_t \frac{ d \sigma_n^2}{h_{{\min},t}^2} \Big( \frac{1}{\hat{x}_t}-\frac{1}{x_{\max}}\Big)  + \frac{R \nu^2}{2}.
\end{align}
Since AdaScale solves \eqref{PerRoundProposedOptimization} optimally, and the RHS of \eqref{SumDriftPenalty} is the summation of the objective function of \eqref{PerRoundProposedOptimization} (up to a constant) over rounds, AdaScale achieves the minimum value of the RHS of \eqref{SumDriftPenalty}. In particular, considering $x_{t}=x_{\max}, \forall t$, as a feasible solution to \eqref{PerRoundProposedOptimization} and its resultant RDP leakage $\sigma_{m,t}^{\min}, \forall t$, we obtain
\begin{align}\label{ZeroUpperbound}
&V \sum_{t=0}^{R-1} \sum_{m=1}^M   \rho_{\alpha}(q_m, \hat{\sigma}_{m,t})  + \Delta_R(0) \nonumber \\ & \le V \sum_{t=0}^{R-1} \sum_{m=1}^M  \rho_{\alpha}(q_m, \sigma^{\min}_{m,t})  + \frac{R \nu^2}{2}.
\end{align}
This implies
\begin{align}\label{rslotdriftupper}
    \Delta_R(0) \le V \sum_{t=0}^{R-1} \!\sum_{m=1}^M  \! \rho_{\alpha}(q_m, \sigma^{\min}_{m,t})+\!\frac{R \nu^2}{2} , 1\le R \le T.
\end{align}
Using \eqref{QueueUpperbound} together with \eqref{rslotdriftupper}, we can provide an upper bound on the queue length as
\begin{align}
    Q_R & \le \left(2 V \sum_{t=0}^{R-1} \!\sum_{m=1}^M  \! \rho_{\alpha}(q_m, \sigma^{\min}_{m,t})+\! R \nu^2 \right)^{\frac{1}{2}} \label{increasing-upp} \\  &  \overset{(a)}{\le} Q_T^{\max},  \quad 1\le R \le T,
\end{align}
where (a) follows from the fact that the RHS of \eqref{increasing-upp} is an increasing function of $R$.
\end{proof}

\subsection{Constraint Violation Bound}
The following theorem provides an upper bound on the amount of violation with respect to the constraint \eqref{OriginalProblem-constraint}.

\begin{theorem}[Constraint violation bound]\label{ConstrainViolationProp}
    Under AdaScale, the constraint violation of problem \eqref{OriginalProblem} is upper bounded as
    \begin{align}
           \frac{1}{T}\sum_{t=0}^{T-1} \frac{d \sigma_n^2}{h_{{\min},t}^2}  \big( \frac{1}{\hat{x}_t}-\frac{1}{x_{\max}}\big)- \nu  \le \frac{Q_T^{\max}}{T}.
\end{align}
\end{theorem}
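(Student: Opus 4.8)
The plan is to bypass the drift machinery entirely and instead exploit the one-sided structure of the virtual queue recursion. The key observation is that the $\max\{\cdot,0\}$ in the queue update~\eqref{VirtualQueueUpdate} makes $Q_{t+1}$ no smaller than its argument, so each queue increment furnishes a \emph{lower} bound on the per-round constraint slack. Telescoping this inequality across the horizon converts the left-hand side into $Q_T$, after which the uniform queue bound from Lemma~\ref{QueueUpperBoundLemma} finishes the argument.

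Concretely, I would first use $\max\{a,0\}\ge a$ in~\eqref{VirtualQueueUpdate} evaluated at the AdaScale decision $\hat{x}_t$ to obtain
\[
Q_{t+1} \ge Q_t + \frac{d\sigma_n^2}{h_{{\min},t}^2}\Big(\frac{1}{\hat{x}_t}-\frac{1}{x_{\max}}\Big)-\nu .
\]
Rearranging gives the per-slot increment $Q_{t+1}-Q_t \ge \frac{d\sigma_n^2}{h_{{\min},t}^2}\big(\frac{1}{\hat{x}_t}-\frac{1}{x_{\max}}\big)-\nu$. Summing over $t=0,\dots,T-1$ telescopes the left-hand side to $Q_T-Q_0$, and since the queue is initialized at $Q_0=0$ this equals $Q_T$. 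Dividing both sides by $T$ then produces precisely the time-averaged constraint-violation quantity on the left of the claimed inequality, upper bounded by $Q_T/T$.

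The last step is to eliminate the dependence on the final queue value by substituting its horizon-wide surrogate: Lemma~\ref{QueueUpperBoundLemma} guarantees $Q_T\le Q_T^{\max}$, hence $Q_T/T\le Q_T^{\max}/T$, which is exactly the stated bound. I expect the telescoping itself to be entirely routine; essentially all of the difficulty has been front-loaded into Lemma~\ref{QueueUpperBoundLemma}, whose proof needed the more delicate $R$-slot drift analysis together with the feasibility of $x_t=x_{\max}$ to control $Q_T^{\max}$.

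The one subtlety worth flagging is directional: here the $\max$ must be dropped so as to \emph{lower}-bound the queue (yielding an accumulated-slack inequality), which is the opposite orientation to the drift analysis where the queue is upper bounded. Keeping these two uses of the recursion distinct is the only place where a careless step could invalidate the argument; beyond that, the proof is a direct telescoping followed by an appeal to the queue bound.
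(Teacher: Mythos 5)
Your proposal is correct and follows exactly the paper's own argument: drop the $\max$ in \eqref{VirtualQueueUpdate} to lower-bound $Q_{t+1}$ by its argument, telescope the resulting per-slot increments to $Q_T$ (using $Q_0=0$), and invoke Lemma~\ref{QueueUpperBoundLemma} to bound $Q_T$ by $Q_T^{\max}$. No gaps; the directional subtlety you flag is handled identically in the paper.
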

\begin{proof}
We have
\begin{align}
\hspace{-1em}
      \frac{1}{T}\sum_{t=0}^{T-1} \frac{d \sigma_n^2}{h_{{\min},t}^2}  \big( \frac{1}{\hat{x}_t}-\frac{1}{x_{\max}}\big)-  \nu &  \overset{\mathrm{(a)}}{\le} \frac{1}{T} \sum_{t=0}^{T-1} \big( Q_{t+1} - Q_t \big)  \\
    &  \overset{\mathrm{(b)}}{\le} \frac{Q_T^{\max}}{T},
\end{align}
where (a) follows from $ Q_{t} + \frac{d \sigma_n^2}{h_{{\min},t}^2}  \big( \frac{1}{\hat{x}_t}-\frac{1}{x_{\max}}\big) -\nu \le  Q_{t+1} $ based on \eqref{VirtualQueueUpdate}, and (b) follows from Lemma \ref{QueueUpperBoundLemma}.
\end{proof}

\subsection{Dynamic Regret Bound}

Let $\{x^\star_t \}$ denote the offline optimal solution to \eqref{OriginalProblem} when all future information is available and $\sigma^{\star}_{m,t} \triangleq \frac{\sigma_n M B_m}{G h_{{\min},t} \sqrt{2x^{\star}_t}}$. We aim to derive an upper bound on the dynamic regret, which is the difference in the time-averaged RDP leakage achieved under AdaScale and that of $\{ x^\star_t\}$.

\begin{theorem}[Dynamic regret bound]\label{PerformanceGapProp} The dynamic regret of AdaScale is upper bounded as 
\begin{align}\label{Prop4-eq}
            \frac{1}{T}\sum_{t=0}^{T-1}\sum_{m=1}^M \Big( \rho_{\alpha}(q_m, \hat{\sigma}_{m,t}) -& \rho_{\alpha}(q_m, {\sigma}^\star_{m,t})  \Big)  \nonumber \\ & \le \frac{Q_T^{\max} \nu}{V}  +  \frac{T\nu^2}{2V} + \frac{\nu^2}{2V}.
\end{align}
\end{theorem}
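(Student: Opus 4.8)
The plan is to combine the exact per-round optimality of AdaScale with the one-slot drift bound \eqref{DrfitPlusPenalty}. First I would observe that the offline optimum $x^\star_t$ is feasible for the per-round problem \eqref{PerRoundProposedOptimization} at every round, since it satisfies $0 < x^\star_t \le x_{\max}$. Because AdaScale returns $\hat{x}_t$ as the exact minimizer of the per-round objective \eqref{PerRound-Objective}, the value of \eqref{PerRound-Objective} at $\hat{x}_t$ is no larger than its value at $x^\star_t$. The terms of \eqref{PerRound-Objective} other than $V\sum_m \rho_\alpha(q_m,\hat\sigma_{m,t})$ are precisely the quantity that the one-slot drift bound \eqref{DrfitPlusPenalty} upper-bounds (up to the additive $\tfrac12\nu^2$), so substituting the drift bound into the left side of the per-round optimality inequality and rearranging yields, for each $t$,
\begin{align}
& V\sum_{m=1}^M\big(\rho_\alpha(q_m,\hat\sigma_{m,t})-\rho_\alpha(q_m,\sigma^\star_{m,t})\big) + \Delta_1(t) \nonumber\\
& \quad \le Q_t\frac{d\sigma_n^2}{h_{{\min},t}^2}\Big(\frac{1}{x^\star_t}-\frac{1}{x_{\max}}\Big) + \frac12\Big(\frac{d\sigma_n^2}{h_{{\min},t}^2}\Big)^2\Big(\frac{1}{x^\star_t}-\frac{1}{x_{\max}}\Big)^2 + \frac12\nu^2. \nonumber
\end{align}

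Next I would sum this inequality over $t = 0,\dots,T-1$. The drift term telescopes to $\tfrac12 Q_T^2 - \tfrac12 Q_0^2 = \tfrac12 Q_T^2 \ge 0$ since $Q_0 = 0$, so it can be discarded from the left-hand side, leaving $V\sum_t\sum_m\big(\rho_\alpha(q_m,\hat\sigma_{m,t})-\rho_\alpha(q_m,\sigma^\star_{m,t})\big)$ bounded above by the summed right-hand side, which contains a linear-in-$x^\star_t$ term, a quadratic-in-$x^\star_t$ term, and the constant $\tfrac{T\nu^2}{2}$. It then remains to control the two $x^\star_t$-dependent sums using the feasibility of the offline solution and the queue bound.

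Writing $a_t \triangleq \frac{d\sigma_n^2}{h_{{\min},t}^2}\big(\frac{1}{x^\star_t}-\frac{1}{x_{\max}}\big)$, note $a_t \ge 0$ because $x^\star_t \le x_{\max}$, and the feasibility of $\{x^\star_t\}$ for \eqref{OriginalProblem-constraint} gives $\sum_t a_t \le T\nu$. For the linear term I would invoke $Q_t \le Q_T^{\max}$ from Lemma~\ref{QueueUpperBoundLemma} to get $\sum_t Q_t a_t \le Q_T^{\max}\sum_t a_t \le Q_T^{\max}T\nu$. The step I expect to be the crux is the quadratic term: here I would use that for nonnegative summands the sum of squares is dominated by the square of the sum, $\sum_t a_t^2 \le \big(\sum_t a_t\big)^2 \le (T\nu)^2$, which is exactly what produces the $\tfrac{T\nu^2}{2V}$ contribution after normalization. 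Collecting the three pieces and dividing through by $VT$ then gives $\frac{Q_T^{\max}T\nu}{VT} = \frac{Q_T^{\max}\nu}{V}$, $\frac{(T\nu)^2}{2VT} = \frac{T\nu^2}{2V}$, and $\frac{T\nu^2/2}{VT} = \frac{\nu^2}{2V}$, recovering the stated bound \eqref{Prop4-eq}. The main obstacle is recognizing the per-round objective as a drift-plus-penalty surrogate and exploiting the nonnegativity of $a_t$ to tame the quadratic penalty term introduced by the modified per-round objective.
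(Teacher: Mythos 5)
Your proposal is correct and follows essentially the same route as the paper's proof: it invokes the feasibility of $\{x^\star_t\}$ for the per-round problem, the optimality of $\hat{x}_t$, the telescoping of the drift with $\Delta_T(0)\ge 0$, the queue bound $Q_t\le Q_T^{\max}$ from Lemma~\ref{QueueUpperBoundLemma}, and the $\ell_2\le\ell_1$ bound on the nonnegative terms $\frac{d\sigma_n^2}{h_{\min,t}^2}\big(\frac{1}{x^\star_t}-\frac{1}{x_{\max}}\big)$ together with constraint~\eqref{OriginalProblem-constraint}. The only cosmetic difference is that you apply per-round optimality before summing, whereas the paper sums the drift bounds into \eqref{SumDriftPenalty} first; the resulting computation is identical.
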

\begin{proof}
We use a similar argument as in the proof of Lemma~\ref{QueueUpperBoundLemma}. Using \eqref{SumDriftPenalty} with $R= T$, and considering $x_t = x^\star_t, \forall t$, as a feasible solution to \eqref{PerRoundProposedOptimization}, we obtain
\begin{align}\label{DriftPenaltyUpperbound-PeriterationOptimal}
  & V \sum_{t=0}^{T-1}  \sum_{m=1}^M  \rho_{\alpha}(q_m, \hat{\sigma}_{m,t}) + \Delta_T(0)   \nonumber \\ & \le  V \sum_{t=0}^{T-1}  \sum_{m=1}^M  \rho_{\alpha}(q_m, {\sigma}^\star_{m,t})   +  \frac{1}{2} \sum_{t=0}^{T-1} \big(\frac{ d \sigma_n^2}{h_{{\min},t} ^2}\big)^2 \Big( \frac{1}{x^\star_t}-\frac{1}{x_{\max}}\Big)^2 \nonumber \\ & \qquad \qquad +    \sum_{t=0}^{T-1} Q_t \frac{ d \sigma_n^2}{h_{{\min},t}^2} \Big( \frac{1}{x^\star_t}-\frac{1}{x_{\max}}\Big)  + \frac{T \nu^2}{2}.
\end{align}
We now provide an upper bound on the RHS of \eqref{DriftPenaltyUpperbound-PeriterationOptimal}.
The second term in the RHS of \eqref{DriftPenaltyUpperbound-PeriterationOptimal} can be upper bounded as
\begin{align}
&\frac{1}{2}\sum_{t=0}^{T-1} \big(\frac{ d \sigma_n^2}{h_{{\min},t} ^2}\big)^2  \big( \frac{1}{x^{\star}_t}-\frac{1}{x_{\max}}\big)^2 \nonumber \\ & \qquad \overset{(a)}{\le}  \frac{1}{2}\Big( \sum_{t=0}^{T-1} \frac{ d \sigma_n^2}{h_{{\min},t} ^2} \big( \frac{1}{x^{\star}_t}-\frac{1}{x_{\max}}\big) \Big)^2  \overset{(b)}{\le}  \frac{T^2\nu^2 }{2}, \label{regretboundproof-5}
\end{align}
where (a) follows from the fact $\| \mathbf{y}\|_2 \le \| \mathbf{y}\|_1$, if all entries of $\mathbf{y} \in \mathbb{R}^T$ are positive, and (b) is due to the fact that $\{ x^\star_{t} \}$ meets the constraint~\eqref{OriginalProblem-constraint}. 
Additionally, the third term on the RHS of \eqref{DriftPenaltyUpperbound-PeriterationOptimal} can be further upper bounded as
\begin{align}
   \sum_{t=0}^{T-1} \! Q_t \frac{ d \sigma_n^2}{h_{{\min},t} ^2} \big( \frac{1}{x^{\star}_t}-\frac{1}{x_{\max}}\big) &  \overset{(a)}{\le} \! Q_T^{\max} \sum_{t=0}^{T-1} \! \frac{ d \sigma_n^2}{h_{{\min},t} ^2} \big( \frac{1}{x^{\star}_t}-\frac{1}{x_{\max}}\big) \nonumber  \\ & \overset{(b)}{\le}   TQ_T^{\max} \nu, \label{regretboundproof-4}
\end{align}
where (a) follows the result in Lemma~\ref{QueueUpperBoundLemma}, and (b) is due to the fact that $\{ x^\star_{t} \}$ meets the constraint~\eqref{OriginalProblem-constraint}. 

Applying the upper bounds in \eqref{regretboundproof-5} and \eqref{regretboundproof-4} on \eqref{DriftPenaltyUpperbound-PeriterationOptimal},  and dividing both sides by $TV$ and noting that $\Delta_T(0) \geq 0$ completes the proof. 
\end{proof}

\subsection{Discussion on Bounds}

In the following, we first present Corollary~\ref{firstCorollary} to simplify the bounds in Theorems \ref{ConstrainViolationProp} and \ref{PerformanceGapProp} and elucidate their scaling w.r.t. $T$. We then draw connection with the convergence of FL training in Corollary~\ref{ConvergenceCorollary}.

\begin{corollary}\label{firstCorollary}
Assume the minimum channel norm is bounded above, i.e., $\min_{m} |h_{m,t}|\le h_{\text{ub}}, \forall t$. Setting $V \propto T^\beta$ for any $\beta \in \mathbb{R}$, the constraint violation bound in Theorem \ref{ConstrainViolationProp} and the dynamic regret bound in Theorem \ref{PerformanceGapProp} reduce to the following:
\begin{subequations}\label{corollary}
\begin{align}
       & \lim_{T \to \infty }\frac{1}{T}\sum_{t=0}^{T-1} \frac{d \sigma_n^2}{ h_{{\min},t}^2}  \Big( \frac{1}{\hat{x}_t}-\frac{1}{x_{\max}}\Big)- \nu   \le \mathcal{O}\big(T^{\frac{\beta-1}{2}}\big). \label{Corol-1} \\ 
        & \lim_{T \to \infty }
        \frac{1}{T}\sum_{t=0}^{T-1}\sum_{m=1}^M \Big( \rho(q_m, \hat{\sigma}_{m,t}) - \rho(q_m, {\sigma}^\star_{m,t})  \Big) \nonumber \\ & \qquad \qquad \qquad \le \mathcal{O}\big(T^{\max\big\{1-\beta, \frac{1-\beta}{2}\big\}}\big). \label{Corol-2}
\end{align}
\end{subequations}
\end{corollary}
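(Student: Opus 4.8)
The plan is to reduce the entire corollary to a single scaling estimate for $Q_T^{\max}$ and then substitute that estimate into the two bounds of Theorems~\ref{ConstrainViolationProp} and~\ref{PerformanceGapProp}. All of the asymptotic behavior in $T$ is ultimately driven by the growth rate of $Q_T^{\max}$ after $V \propto T^\beta$ is inserted, so the crux is to show that, under the stated channel assumption, the summation inside $Q_T^{\max}$ grows only linearly in $T$, giving $Q_T^{\max} = \mathcal{O}\big(T^{\max\{(1+\beta)/2,\,1/2\}}\big)$.

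First I would control the term $\sum_{t=0}^{T-1}\sum_{m} \rho_\alpha(q_m,\sigma^{\min}_{m,t})$ appearing in Lemma~\ref{QueueUpperBoundLemma}. Since $k_m^2 = 1 + (1-q_m)/B_m \ge 1$, we have $k_m \ge 1$, and therefore $h_{\min,t} = \min_m |h_{m,t}|/k_m \le \min_m |h_{m,t}| \le h_{\text{ub}}$ for every $t$. Substituting this into the definition of $\sigma^{\min}_{m,t} = \sigma_n M B_m/(G h_{\min,t}\sqrt{2x_{\max}})$ yields a uniform-in-$t$ lower bound $\sigma^{\min}_{m,t} \ge \sigma_n M B_m/(G h_{\text{ub}}\sqrt{2x_{\max}})$, a positive constant. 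I would then invoke monotonicity of $\rho_\alpha(q,\sigma)$ in $\sigma$: in the closed form of Lemma~\ref{RDP-lEAKAGE-LEMMA} each summand $\binom{\alpha}{k}(1-q)^{\alpha-k}q^k\exp\big((k^2-k)/2\sigma^2\big)$ is nonincreasing in $\sigma$ because $k^2-k \ge 0$, so $\rho_\alpha(q,\sigma)$ is nonincreasing in $\sigma$ and the lower bound on $\sigma^{\min}_{m,t}$ translates into a constant upper bound on each $\rho_\alpha(q_m,\sigma^{\min}_{m,t})$. Hence the double sum is $\mathcal{O}(T)$, and plugging $V \propto T^\beta$ into Lemma~\ref{QueueUpperBoundLemma} gives $Q_T^{\max} = \big(\mathcal{O}(T^{1+\beta}) + \mathcal{O}(T)\big)^{1/2} = \mathcal{O}\big(T^{\max\{(1+\beta)/2,\,1/2\}}\big)$.

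Second, I would substitute this estimate into the two bounds, treating $\nu$ as a constant since it is a fixed tunable hyperparameter. For the constraint violation in Theorem~\ref{ConstrainViolationProp}, dividing by $T$ gives $Q_T^{\max}/T = \mathcal{O}\big(T^{\max\{(\beta-1)/2,\,-1/2\}}\big)$, which reduces to $\mathcal{O}(T^{(\beta-1)/2})$ in the regime of interest $\beta \ge 0$, establishing~\eqref{Corol-1}. For the dynamic regret in Theorem~\ref{PerformanceGapProp}, I would examine the three terms $Q_T^{\max}\nu/V$, $T\nu^2/(2V)$, and $\nu^2/(2V)$ separately: with $V \propto T^\beta$ they scale as $\mathcal{O}(T^{(1-\beta)/2})$, $\mathcal{O}(T^{1-\beta})$, and $\mathcal{O}(T^{-\beta})$, respectively. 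Taking the dominant exponent and noting $-\beta < 1-\beta$ yields the combined rate $\mathcal{O}\big(T^{\max\{1-\beta,\,(1-\beta)/2\}}\big)$, which is exactly~\eqref{Corol-2}.

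The hard part will be the uniform-in-$t$ upper bound on $\rho_\alpha(q_m,\sigma^{\min}_{m,t})$: this is precisely where the channel assumption enters, and it is what prevents the sum inside $Q_T^{\max}$ from growing super-linearly in $T$, which would otherwise break the claimed rates. It hinges jointly on the monotonicity of the RDP leakage in its noise-multiplier argument and on translating the magnitude bound $\min_m|h_{m,t}| \le h_{\text{ub}}$ into a lower bound on $\sigma^{\min}_{m,t}$. Once these are secured, the remainder is routine big-$\mathcal{O}$ bookkeeping together with tracking which of the three regret terms dominates as a function of $\beta$.
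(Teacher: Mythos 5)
Your proposal is correct and follows essentially the same route as the paper's (one-line) proof: substitute $V \propto T^\beta$ and bound $Q_T^{\max}$ via $h_{\min,t} \le \min_m |h_{m,t}| \le h_{\text{ub}}$, using the monotonicity of $\rho_\alpha$ in its noise argument to make the sum inside $Q_T^{\max}$ grow only linearly in $T$. Your explicit handling of the $\beta \ge 0$ regime for the constraint-violation rate (where the $T\nu^2$ term would otherwise dominate) is in fact a point the paper glosses over, so the extra care there is welcome.
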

\begin{proof}
Setting \( V \propto T^\beta \) in the bounds of Theorems~\ref{ConstrainViolationProp} and~\ref{PerformanceGapProp}, and upper bounding \( Q_T^{\max} \) using $h_{\min,t} \overset{(a)}{\le} \min_m |h_{m,t}| \le h_{\text{ub}}$, we obtain the results in \eqref{Corol-1} and \eqref{Corol-2}. 
\end{proof}

In Corollary~\ref{firstCorollary}, the parameter $\beta$ balances the trade-off between utility and privacy. Specifically, $\beta > 1$ yields a diminishing bound for the regret, while $\beta < 1$ results in a diminishing bound for the constraint violation. Although these two regions of $\beta$ do not overlap, the following corollary establishes that, when the minimum channel norm is bounded both below and above, and $1 < \beta < 2$, AdaScale achieves diminishing dynamic regret and ensures convergence to a stationary point 
of the global loss function.

\begin{corollary}\label{ConvergenceCorollary}
Assume the minimum channel norm is bounded both below and above, i.e., $ h_{\text{lb}} \le \min_m |h_{m,t}| \le h_{\text{ub}}, \forall t$. Setting $V \propto T^\beta$, with  $1<\beta <2$, yields a diminishing regret bound, when $T \to \infty$. Moreover, by setting $\lambda \propto \frac{1}{\sqrt{T}}$, $\frac{1}{T} \sum_{t=0}^{T-1}\mathbb{E}\| \nabla f(\mathbf{w}_t)\|^2$ converges to zero when  $T \to \infty$.
\end{corollary}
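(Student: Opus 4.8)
The plan is to prove the two assertions separately: the regret assertion is essentially a specialization of Corollary~\ref{firstCorollary}, while the convergence assertion requires combining the training bound of Theorem~\ref{chapt6-ConvergenceTheorem} with the constraint-violation bound of Theorem~\ref{ConstrainViolationProp} and the queue bound of Lemma~\ref{QueueUpperBoundLemma}.

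For the regret assertion I would simply invoke \eqref{Corol-2}: with $V \propto T^\beta$ the exponent is $\max\{1-\beta, \frac{1-\beta}{2}\}$, and for any $\beta > 1$ both $1-\beta$ and $\frac{1-\beta}{2}$ are negative, so the bound is $\mathcal{O}(T^{\frac{1-\beta}{2}}) \to 0$. This is where the lower half $\beta > 1$ of the admissible window originates; the upper constraint $\beta < 2$ plays no role here and is instead dictated by convergence.

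For the convergence assertion I would start from the convergence bound rewritten in the decision variables, i.e., the right-hand side of \eqref{NewConvBound} evaluated at the AdaScale iterates $\{\hat{x}_t\}$:
\begin{align}
\frac{1}{T}\sum_{t=0}^{T-1}\mathbb{E}\|\nabla f(\mathbf{w}_t)\|^2 \le \phi' + \frac{L\lambda}{2}\cdot\frac{1}{T}\sum_{t=0}^{T-1}\frac{d\sigma_n^2}{h_{{\min},t}^2}\Big(\frac{1}{\hat{x}_t}-\frac{1}{x_{\max}}\Big).\nonumber
\end{align}
The key move is to bound the averaged term by Theorem~\ref{ConstrainViolationProp}, which gives $\frac{1}{T}\sum_t \frac{d\sigma_n^2}{h_{{\min},t}^2}(\frac{1}{\hat{x}_t}-\frac{1}{x_{\max}}) \le \nu + \frac{Q_T^{\max}}{T}$. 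Substituting, the gradient-norm average is bounded by $\phi' + \frac{L\lambda}{2}\nu + \frac{L\lambda}{2}\cdot\frac{Q_T^{\max}}{T}$, and it remains to show each of these three terms vanishes. To do so I would use the two-sided channel bound $h_{\text{lb}} \le \min_m|h_{m,t}| \le h_{\text{ub}}$ to deduce that $h_{{\min},t} = \min_m \frac{|h_{m,t}|}{k_m}$ is bounded above and below by $T$-independent constants. The lower bound makes $\sum_{t=0}^{T-1}\frac{d\sigma_n^2}{h_{{\min},t}^2 x_{\max}} = \mathcal{O}(T)$, so $\phi' = \phi + \mathcal{O}(\lambda)$; since $\phi = \mathcal{O}(\frac{1}{\lambda T}) + \mathcal{O}(\lambda)$, choosing $\lambda \propto T^{-1/2}$ yields $\phi' = \mathcal{O}(T^{-1/2})$. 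The upper bound on $h_{{\min},t}$ forces a $T$-independent lower bound on $\sigma^{\min}_{m,t}$, and since $\rho_\alpha(q,\cdot)$ is decreasing in its noise argument, $\rho_\alpha(q_m,\sigma^{\min}_{m,t}) \le \bar\rho$ for a constant $\bar\rho$. Feeding this into Lemma~\ref{QueueUpperBoundLemma} with $V \propto T^\beta$ gives $Q_T^{\max} = \mathcal{O}(T^{\frac{1+\beta}{2}})$, hence $\frac{Q_T^{\max}}{T} = \mathcal{O}(T^{\frac{\beta-1}{2}})$, consistent with \eqref{Corol-1}. With $\lambda \propto T^{-1/2}$ and $\nu$ fixed, the three terms then scale as $\mathcal{O}(T^{-1/2})$, $\mathcal{O}(T^{-1/2})$, and $\mathcal{O}(T^{-1/2}\cdot T^{\frac{\beta-1}{2}}) = \mathcal{O}(T^{\frac{\beta-2}{2}})$; the third vanishes precisely when $\beta < 2$, supplying the upper half of the window.

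The main obstacle, and the conceptual heart of the corollary, is the opposing pull that $\beta$ exerts: the regret bound shrinks as $\beta$ grows, whereas the queue-induced convergence term $\frac{L\lambda}{2}\cdot\frac{Q_T^{\max}}{T}$ grows with $\beta$. Reconciling them hinges on the diminishing learning rate $\lambda \propto T^{-1/2}$, which simultaneously drives $\phi'$ and $\frac{L\lambda}{2}\nu$ to zero and discounts the queue term by an extra factor $T^{-1/2}$, converting $\mathcal{O}(T^{\frac{\beta-1}{2}})$ into the vanishing $\mathcal{O}(T^{\frac{\beta-2}{2}})$ and opening the feasible window $1 < \beta < 2$. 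A minor but necessary check is that the admissibility condition $\lambda \le \frac{1}{4L(C_1+1)(A_1+1)}$ of Theorem~\ref{chapt6-ConvergenceTheorem} holds for all sufficiently large $T$, which is automatic since $\lambda \propto T^{-1/2} \to 0$.
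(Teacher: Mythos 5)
Your proof is correct and follows essentially the same route as the paper: the regret claim is read off from \eqref{Corol-2} with $\beta>1$, and the convergence claim combines \eqref{NewConvBound} under $\lambda\propto T^{-1/2}$ with the $\mathcal{O}(T^{\frac{\beta-1}{2}})$ constraint-violation scaling (which you re-derive from Lemma~\ref{QueueUpperBoundLemma} and Theorem~\ref{ConstrainViolationProp} rather than citing \eqref{Corol-1} directly) and the two-sided bound on $h_{\min,t}$. The only cosmetic difference is your more explicit bookkeeping of $Q_T^{\max}$ and the (welcome) remark that the learning-rate admissibility condition of Theorem~\ref{chapt6-ConvergenceTheorem} is automatic for large $T$.
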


\begin{proof}
Utilizing the result in \eqref{Corol-2}, it is clear that setting $\beta >1$ results in a diminishing time-averaged regret bound when $T \to \infty$.
Further substituting $\lambda \propto \frac{1}{\sqrt{T}}$ into \eqref{NewConvBound}, we observe that the first two terms on the RHS of \eqref{NewConvBound} are $\mathcal{O}(\frac{1}{\sqrt{T}})$, while the third term is also $O(\frac{1}{\sqrt{T}})$ since $ \frac{h_\text{lb}}{\sqrt{2}}\le \frac{\min_m |h_{m,t}|}{\max_m k_m} \le h_{\min, t}$ as $k_m \le \sqrt{2}, \forall m$. Additionally, substituting $\lambda$ into the fourth term, and using the result in \eqref{Corol-1}, we conclude that the fourth term is $O(T^{\frac{\beta-2}{2}})$. Since $\beta < 2$, all the terms converge to zero as  $T \to \infty$, which completes the proof.
\end{proof}

\section{Numerical Experiments}
\label{chapt6-experiments}

We evaluate the effectiveness of AdaScale in reducing privacy leakage during OTA FL training for classification tasks on the MNIST~\cite{MNIST} and CIFAR-10~\cite{CIFAR10} datasets.

We consider $M=10$ devices, and set the maximum power limit to $P_{\max}$ = 23 dBm.
Assuming a bandwidth of $100$ kHz, we set the noise power to $\sigma_n^2 = -90$ dBm, which accounts for both thermal noise and additional interference at the receiver.
The distance of device $m$ from the server is randomly generated, i.e., $d_{m} \sim \text{Uniform}[r_{\min},r_{\max}]$ with $r_{\min} = 10$ m,  $r_{\max} = 200$ m. The path loss follows the COST Hata model, i.e., $\text{PL}_m \text{[dB]} = 33.44 + 35.22 \log_{10}\big(d_m\big)$ \cite{ COSTHATA-2, COSTHATA-1}. The channel between device \( m \) and the server in round \( t \) is generated as $ h_{m, t} \sim \mathcal{CN}(0, \frac{1}{\text{PL}_m})$, which is i.i.d. across rounds.\footnote{The i.i.d. assumption is more realistic than a correlated channel model in FL, since the channel coherence time is typically much less than $200$ milliseconds even for a fixed device in a wireless environment \cite{rappaport2001wireless, temporalcorrelation}, while a training round of FL typically has duration on the order of seconds and minutes or more.}
We use the following benchmarks for comparison:
\begin{itemize}

\item \textbf{Optimal:} Assuming full knowledge of future information, problem~\eqref{OriginalProblem} becomes a convex problem that can be solved optimally. The resulting solution serves as a lower bound on the achievable privacy leakage.

\item \textbf{EqualAlloc:} This benchmark uniformly allocates \( \nu \) across all rounds to satisfy the constraint in \eqref{OriginalProblem-constraint}. Thus, it sets $x_t =  \frac{x_{\max}}{1+\frac{x_{\max}\nu h_{{\min},t}^2}{d \sigma_n^2}}, \forall t. $
    
\item \textbf{EstimFuture:} This method finds the MMSE estimation of the squared norm of future channels. Using these estimations, the convex problem~\eqref{OriginalProblem} is solved at each round to design \(x_t\), given the remaining constraint budget.

\item \textbf{Method in \cite{19}:} This approach aims to enhance training convergence while bounding the DP leakage. To address the unknown future channels, it employs MMSE estimation of the squared channel norm.

\item \textbf{Method in \cite{28}:} 
This approach has the same aim as that of \cite{19}. It is an online algorithm based on standard Lyapunov optimization.

\end{itemize}

Since, in practice, the upper bound on the norm of sample gradients $G$ is unknown, we follow the convention in the DP literature~\cite{19,21,28} and apply a clipping operation to the sample gradients using a predefined threshold $C$. Specifically, each sample gradient is replaced by its clipped version as 
$\mathbf{g}_{m,t,i} \leftarrow \mathbf{g}_{m,t,i} \min\left(1, \frac{C}{\|\mathbf{g}_{m,t,i}\|} \right)$. Correspondingly, in our solution formulation, $G$ is replaced by $C$.

Since problem~\eqref{OriginalProblem} minimizes the overall RDP leakage subject to a long-term convergence constraint bounded by \( \nu \), we consider different values of \( \nu \) as a measure of convergence and compare the resulting RDP and DP leakages across different methods for each $\nu$. Specifically, to evaluate the RDP leakage for a given order \( \alpha \), we compute \( \rho_m = \sum_{t=0}^{T-1} \rho_\alpha(q_m, \sigma_{m,t}) \) for each device, where the \( m \)-th device satisfies \( (\alpha, \rho_m) \)-RDP. The average RDP leakage across all devices is then reported as \( \frac{1}{M} \sum_{m=1}^M \rho_m \).
To evaluate DP leakage, we fix \( \delta = 10^{-5} \), and compute \( \varepsilon_m \) for each device, where the \( m \)-th device satisfies \((\varepsilon_m, \delta)\)-DP.\footnote{The value of \( \delta \) used in our experiments is commonly adopted in the literature for the datasets considered~\cite{tramerdifferentially}. In principle, \( \delta \) should be chosen to be on the order of \( \frac{1}{2n} \), where \( n \) denotes the dataset size.} We then report the average DP leakage across all devices as \( \frac{1}{M} \sum_{m=1}^M \varepsilon_m \).
The \texttt{Opacus} library~\cite{opacus} is used to compute \( \rho_m \) and \( \varepsilon_m \).

\noindent \textit{\textbf{Hyperparameter tuning:}} For each value of $\nu$, we tune the hyperparameters of each method so that the LHS of constraint~\eqref{OriginalProblem-constraint} matches \( \nu \). This approach allows for fair comparison of different methods under the \textit{same} learning performance.
Specifically, for AdaScale, we tune the parameter \( V \); for the method in~\cite{19}, we tune the privacy budget $\varepsilon_{\text{budget}}$; and for the method in~\cite{28}, we tune both the privacy budget $\varepsilon_{\text{budget}}$ and the objective multiplier \( V \) used in the Lyapunov framework. We set \( \alpha = 3 \) for both ``EstimFuture'' and AdaScale in all experimental settings.\footnote{We observed that, for a fixed \( \nu \), changing \( \alpha \) has a negligible impact on the privacy leakage as measured by DP.
}

\subsection{MNIST Dataset with I.I.D. Data Distribution }\label{chapt6-mnistsection}
In MNIST, each data sample is a labeled grey-scaled handwritten digit image of size $\mathbb{R}^{28} \times \mathbb{R}^{28} $ pixels, with a label indicating its class. There are $60,000$ training and $10,000$ test samples. We consider training a CNN whose architecture is detailed in \cite{papernot2021tempered, tramerdifferentially}, with $d = 26,010$ parameters.

An equal number of data samples from different classes are uniformly and randomly distributed among the devices. The batch size for each device is set to $60$. 
We set the number of training epochs to $5$ and thus the number of rounds is $T= 500$. The learning rate is constant throughout the training and set to $\lambda= 0.5$. The SGD optimizer with a weight decay of $10^{-4}$ is utilized for training. The clipping threshold is set to $C = 1.0$.
We consider several values of \( \nu \) ranging from \( 0.01 \) to \( 0.16 \), which correspond to test accuracies between \( 95\% \) and \( 90\% \).

\begin{figure}[t]
    \centering
\includegraphics[width=0.98\linewidth]{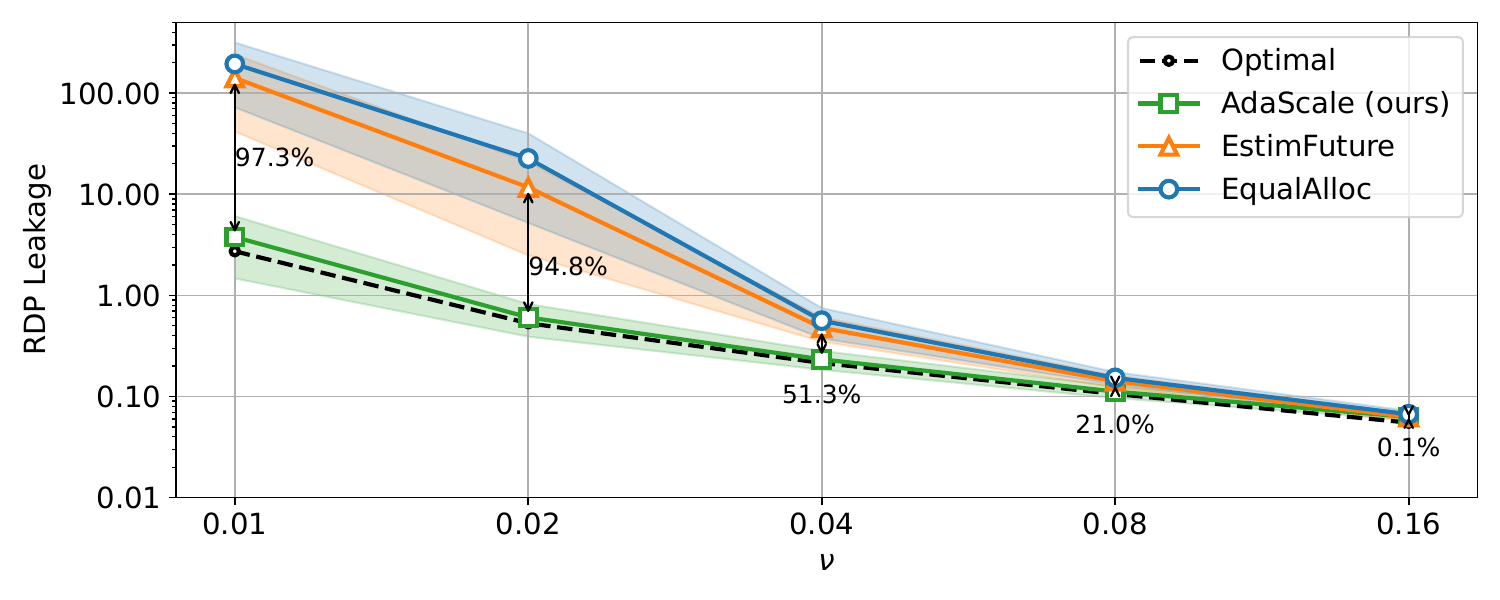}
\includegraphics[width=0.98\linewidth]{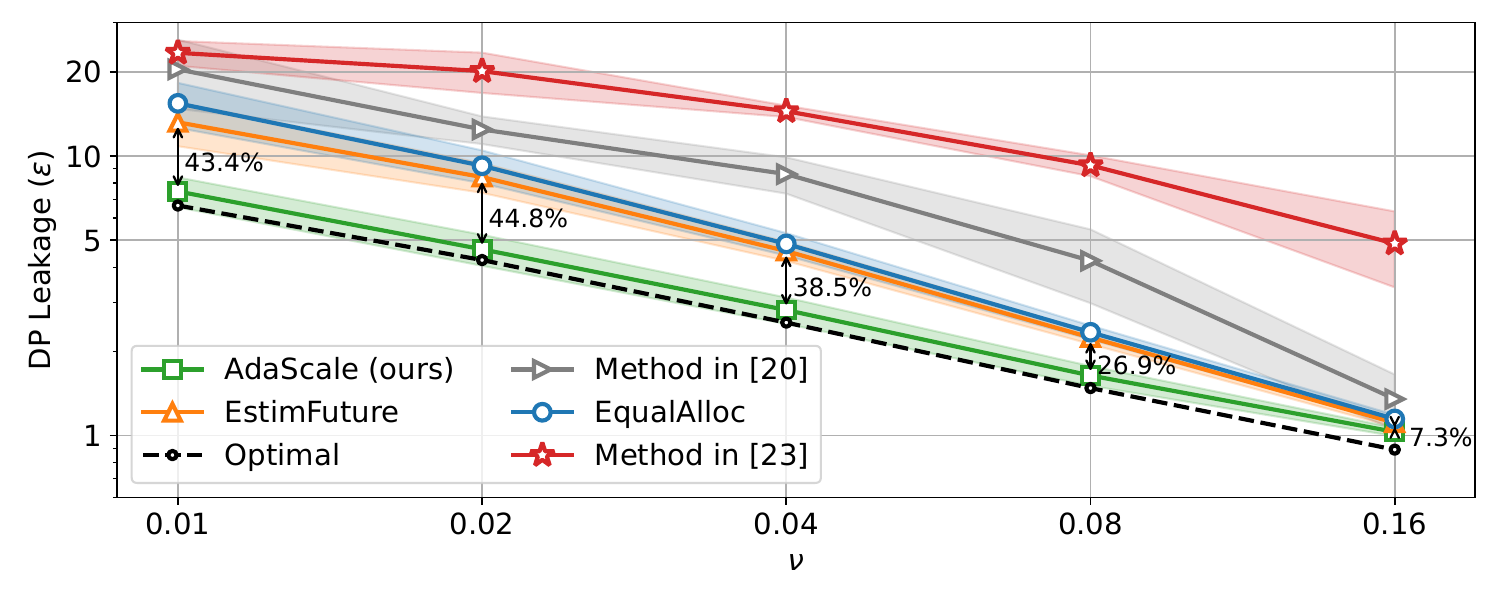}
\caption{RDP and DP leakage vs. $\nu$ for MNIST. Range of $\nu$ corresponds to test accuracies between $90\%$ and $95\%$.}\label{chap6-mnistfig}
\end{figure}

Fig.~\ref{chap6-mnistfig} illustrates the average overall RDP and DP leakages across devices plotted against \( \nu \). The results are averaged over three realizations, and the shaded regions around each curve represent the $95\%$ confidence intervals. Note that when evaluating RDP leakage, we consider only the first two benchmarks, “EqualAlloc” and “EstimFuture,” as the other two methods (from~\cite{19} and~\cite{28}) do not account for RDP leakage in their formulations and exhibit significantly higher leakage, making them incomparable. In contrast, for the evaluation of DP leakage, all four benchmarks are included in the comparison.

Fig.~\ref{chap6-mnistfig} shows that AdaScale reduces the RDP leakage compared with benchmarks across different values of \( \nu \). Furthermore, AdaScale performs close to the offline Optimal benchmark. As \( \nu \) increases, the gap between AdaScale and the benchmarks narrows, since the benchmarks also approach near-optimal performance. However, it is important to note that the more desirable regime corresponds to smaller values of \( \nu \), which correspond to higher learning accuracy, where AdaScale's advantage becomes more pronounced.

Figure~\ref{chap6-mnistfig} further shows that all methods incur higher DP leakage as \( \nu \) decreases, which aligns with the results on RDP.
We observe that although AdaScale is primarily designed with RDP as objective, it can effectively improve privacy in terms of the DP metric as well, outperforming state-of-the-art benchmarks and performs closely to the optimal offline solution. Again, this improvement becomes clearer for smaller values of $\nu$, which correspond to higher learning accuracies.

\subsection{CIFAR-10 Dataset with Non-I.I.D. Data Distribution}
In CIFAR-10, each data sample consists of a colored image of size $ \mathbb{R}^{3}\times \mathbb{R}^{32} \times \mathbb{R}^{32}$ and a label indicating the class of the image. There are $50,000$ training and $10,000$ test samples. We train the CNN described \cite{ResNet} with approximately $500,000$ parameters using the cross-entropy loss. 

The training data is distributed across devices in a non-i.i.d. manner, with each device containing $5000$ samples only from two classes. The batch size is set to $400$, and the training is conducted over $60$ epochs, resulting in $T=720$. We set $C =2.0$.
The learning rate is set to $\lambda = 0.25$, and the SGD optimizer with a momentum of $0.9$ is used.
We consider $\nu$ from $0.01$ to $0.32$, which corresponds to a test accuracy between $65\%$ and $60\%$. 

Fig.~\ref{chap6-cifar10fig} illustrates the RDP and DP leakages for various methods. As shown, for this more challenging learning task, AdaScale still effectively reduces both RDP and DP leakages across different values of the convergence level $\nu$, and its performance is close to that of the optimal offline solution.

\begin{figure}[t]
    \centering
\includegraphics[width=0.98\linewidth]{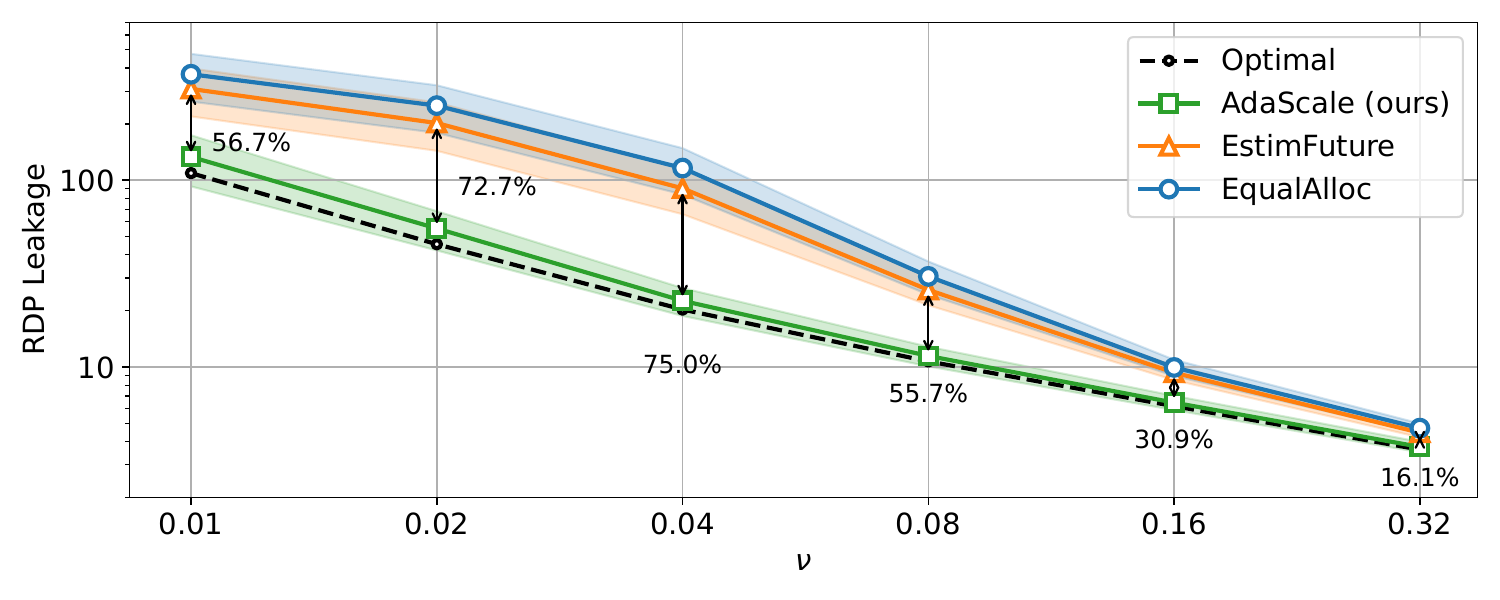}
\includegraphics[width=0.98\linewidth]{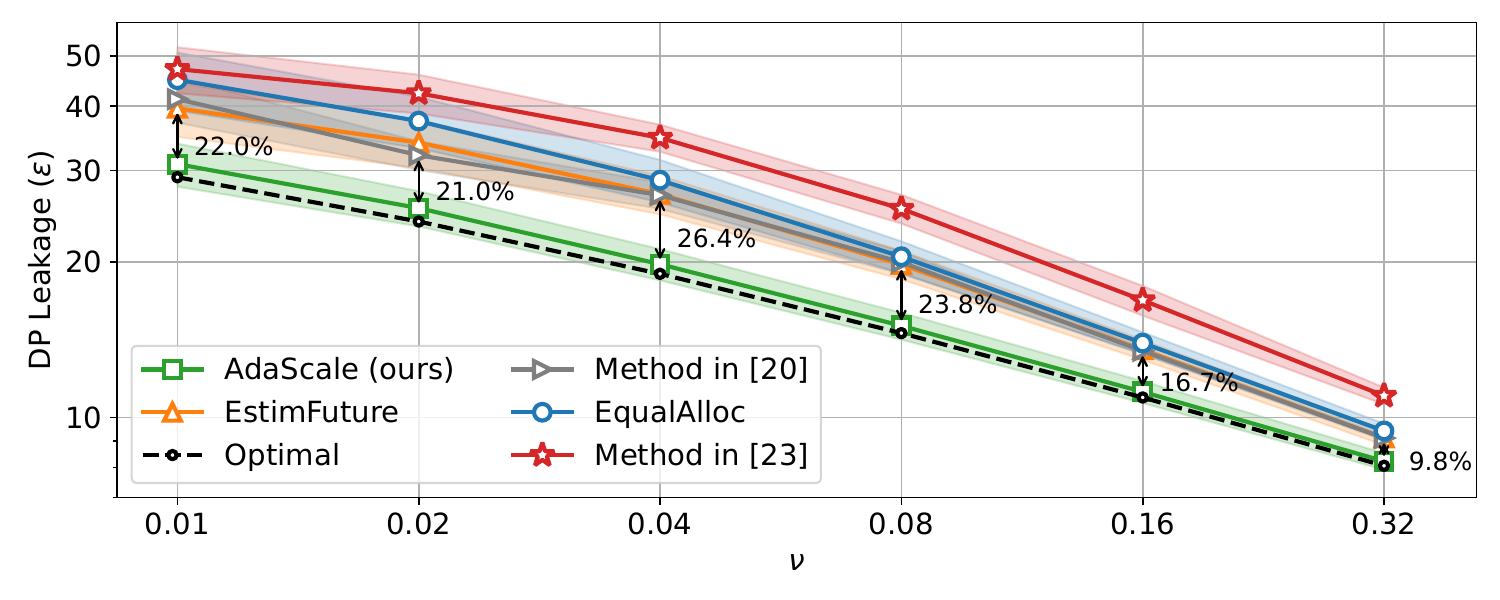}
\caption{RDP and DP leakage vs. $\nu$ for CIFAR-10. Range of $\nu$ corresponds to test accuracies between $60\%$ and $65\%$.}\label{chap6-cifar10fig}
\end{figure}

\section{Conclusion}\label{chapt6-summary}

In this work, we have investigated adaptive design of the receive scaling factors in an OTA FL system under dynamic wireless channel conditions, to reduce the overall privacy leakage during training. Unlike previous works, we aimed to minimize the overall RDP leakage directly while ensuring a specific level of convergence for the global loss function. 
We propose AdaScale, a novel online algorithm with per-round optimization problems that can be efficiently solved. Through novel bounding techniques, we derive upper bounds on the dynamic regret and constraint violation of the proposed algorithm, establishing that it achieves diminishing dynamic regret in time-averaged RDP leakage while ensuring convergence to a stationary point of the global loss function.
Numerical experiments show that our approach performs nearly optimally and effectively reduces both RDP and DP leakages compared with state-of-the-art benchmarks under the same learning performance.


\appendices 
\section{Proof of Theorem~\ref{chapt6-ConvergenceTheorem}}
\label{appendix-6A}

We first present the preliminary lemmas required for the proof in Appendix~\ref{chapt6-ExtraLemmas}, and then provide the complete proof of the theorem in Appendix~\ref{chapt6-ConvergenceTheoremProof}.

\subsection{Preliminary Lemmas for Proof of Theorem~\ref{chapt6-ConvergenceTheorem}}\label{chapt6-ExtraLemmas}

\begin{lemma}\label{chapt6-ExtraLemma1}
Suppose that assumption \textbf{A3} holds. Then, for the \( t \)-th round of the FedSGD algorithm described in Section~\ref{chapt6-FedSGDwithOTA}, the following equality holds:
\begin{align}
        \mathbb{E}& \Big[\Big< \nabla f(\mathbf{w}_t),  \mathbf{w}_{t+1}- \mathbf{w}_t  \Big> \Big|\mathbf{w}_t\Big] = -\lambda \| \nabla f(\mathbf{w}_t)\|^2.
\end{align}
\end{lemma}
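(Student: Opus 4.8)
The plan is to compute the conditional expectation of the inner product $\langle \nabla f(\mathbf{w}_t), \mathbf{w}_{t+1} - \mathbf{w}_t \rangle$ directly by substituting the model update rule and exploiting the linearity of expectation together with the unbiasedness from \textbf{A3}. First I would write the model increment explicitly. From the update \eqref{ModelUpdate} combined with the decomposition \eqref{decmpose}, we have $\mathbf{w}_{t+1} - \mathbf{w}_t = -\lambda \tilde{\mathbf{r}}_t = -\lambda(\mathbf{s}_t + \tilde{\mathbf{n}}_t)$, where $\mathbf{s}_t = \frac{1}{M}\sum_{m=1}^M \mathbf{g}_{m,t}$ is the aggregated signal and $\tilde{\mathbf{n}}_t$ is the effective receiver noise with zero mean.

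Next I would push the expectation through the inner product. Since $\nabla f(\mathbf{w}_t)$ is deterministic given $\mathbf{w}_t$, we get
\begin{align}
\mathbb{E}\Big[\big\langle \nabla f(\mathbf{w}_t), \mathbf{w}_{t+1}-\mathbf{w}_t \big\rangle \,\big|\, \mathbf{w}_t\Big] = -\lambda \big\langle \nabla f(\mathbf{w}_t),\, \mathbb{E}[\mathbf{s}_t \mid \mathbf{w}_t] + \mathbb{E}[\tilde{\mathbf{n}}_t \mid \mathbf{w}_t]\big\rangle. \nonumber
\end{align}
The noise term vanishes because $\tilde{\mathbf{n}}_t$ has zero mean and is independent of the data. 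For the signal term, I would compute $\mathbb{E}[\mathbf{s}_t \mid \mathbf{w}_t]$ by taking the expectation over both the Poisson batch sampling and the per-sample gradient noise. Using the averaging definition \eqref{averaging} and the unbiasedness relation \eqref{A3-Unbiasedness}, each sampled gradient satisfies $\mathbb{E}[\mathbf{g}_{m,t,i}\mid \mathbf{w}_t] = \nabla f_m(\mathbf{w}_t)$; after averaging over the Poisson-sampled batch (whose expected normalization is $B_m$) this yields $\mathbb{E}[\mathbf{g}_{m,t}\mid \mathbf{w}_t] = \nabla f_m(\mathbf{w}_t)$, and then $\mathbb{E}[\mathbf{s}_t \mid \mathbf{w}_t] = \frac{1}{M}\sum_{m=1}^M \nabla f_m(\mathbf{w}_t) = \nabla f(\mathbf{w}_t)$ by the definition of the global loss \eqref{GlobalLoss}. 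Substituting this back gives $-\lambda\langle \nabla f(\mathbf{w}_t), \nabla f(\mathbf{w}_t)\rangle = -\lambda\|\nabla f(\mathbf{w}_t)\|^2$, as claimed.

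The only delicate point I anticipate is verifying that the Poisson sampling indeed leaves $\mathbf{g}_{m,t}$ unbiased despite the random batch size: the sum $\sum_{i\in\mathcal{B}_{m,t}}\mathbf{g}_{m,t,i}$ has expectation $q_m n_m \nabla f_m(\mathbf{w}_t) = B_m \nabla f_m(\mathbf{w}_t)$ since each of the $n_m$ points is included independently with probability $q_m = B_m/n_m$, so dividing by the fixed normalizer $B_m$ recovers $\nabla f_m(\mathbf{w}_t)$ exactly. This is the main obstacle in that one must be careful that the normalization in \eqref{averaging} uses the \emph{expected} batch size $B_m$ rather than the realized $|\mathcal{B}_{m,t}|$; otherwise the estimator would not be unbiased. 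Everything else is a routine application of linearity of expectation and the tower property.
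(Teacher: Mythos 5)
Your proposal is correct and follows essentially the same route as the paper's proof: decompose the update into the signal $\mathbf{s}_t$ plus the zero-mean effective noise $\tilde{\mathbf{n}}_t$, show the Poisson-sampled batch average is unbiased because the normalizer is the expected batch size $B_m$ (so $\mathbb{E}[\mathbf{g}_{m,t}\mid\mathbf{w}_t]=\nabla f_m(\mathbf{w}_t)$ via \textbf{A3}), and conclude with the definition of the global loss. The ``delicate point'' you flag about the fixed normalizer versus the realized batch size is exactly the content of step (c) in the paper's argument.
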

\begin{proof}
Based on the model update in \eqref{ModelUpdate}, we have
\begin{align}
    &\mathbb{E} \Big[\Big< \nabla f(\mathbf{w}_t),  \mathbf{w}_{t+1}- \mathbf{w}_t  \Big> \Big|\mathbf{w}_t\Big] \nonumber \\ & \qquad =     \mathbb{E}\Big[\Big< \nabla f(\mathbf{w}_t), -\lambda \frac{\text{Re}(\mathbf{r}_t)}{\sqrt{\eta_t}}  \Big> \Big|\mathbf{w}_t\Big] \\ & \qquad \overset{(a)}{=}     \Big< \nabla f(\mathbf{w}_t), -\lambda \mathbb{E}\Big[\mathbf{s}_t +\tilde{\mathbf{n}}_t \Big | \mathbf{w}_t \Big] \Big> 
    \\ & \qquad \overset{(b)}{=}     \Big< \nabla f(\mathbf{w}_t), -\lambda \mathbb{E}\Big[\frac{1}{M} \sum_{m=1}^M {\mathbf{g}}_{m,t}  \Big | \mathbf{w}_t \Big] \Big> 
    \\ & \qquad \overset{(c)}{=}     \Big< \nabla f(\mathbf{w}_t), -\lambda \frac{1}{M} \sum_{m=1}^M \mathbb{E}\Big[\frac{\sum_{i=1 }^{n_m}{\mathbf{g}}_{m,t,i}}{n_m}  \Big | \mathbf{w}_t \Big] \Big> 
  \\ & \qquad \overset{(d)}{=}      \Big< \nabla f(\mathbf{w}_t), -\lambda \frac{1}{M} \sum_{m=1}^M \nabla f_m(\mathbf{w}_t)\Big> 
    \\ & \qquad \overset{(e)}{=}   -\lambda \| \nabla f(\mathbf{w}_t)\|^2,
\end{align}
where (a) follows the definitions of $\mathbf{s}_t$ and $\tilde{\mathbf{n}}_t $ in \eqref{decmpose}, (b) is due to the fact that $\tilde{\mathbf{n}}_t $ is zero-mean and independent of $\mathbf{w}_t$, (c) follows from $\mathbb{E}[{\mathbf{g}}_{m,t}| \mathbf{w}_t] = \frac{1}{B_m}\mathbb{E}\big[ \mathbb{E}_{\mathcal{B}_{m,t}}\big[\sum_{i \in \mathcal{B}_{m,t}}{\mathbf{g}}_{m,t,i}\big] \big| \mathbf{w}_t\big] = \frac{1}{B_m}\mathbb{E}[\sum_{i=1 }^{n_m} \frac{B_m}{n_m}{\mathbf{g}}_{m,t,i}| \mathbf{w}_t]$ due to Poisson sampling with rate $\frac{B}{n_m}$, (d) follows from \eqref{A3-Unbiasedness} in assumption \textbf{A3} , and finally (e) follows the definition of global loss function in \eqref{GlobalLoss}.
\end{proof}

\begin{lemma}\label{chapt6-ExtraLemma2}
Suppose that assumptions \textbf{A3} and \textbf{A4} hold. Then, for the \( t \)-th round of the FedSGD algorithm described in Section~\ref{chapt6-FedSGDwithOTA}, the following inequality holds:
\begin{align}
    \frac{L}{2} \mathbb{E}\Big[ &\| \mathbf{w}_{t+1}-  \mathbf{w}_t\|^2 \Big|\mathbf{w}_t\Big]  \le L \lambda^2 A_2   
     +\frac{L \lambda^2 d \sigma_n^2}{4 \eta_t} \nonumber \\ & + 
    2L \lambda^2 (A_1+1) \Big( (C_1+1) \|\nabla f(\mathbf{w}_t) \|^2 +C_2\Big). \label{Lem2-final}
\end{align}
\end{lemma}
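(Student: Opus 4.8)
The plan is to bound the expected squared model displacement $\mathbb{E}[\|\mathbf{w}_{t+1}-\mathbf{w}_t\|^2 \mid \mathbf{w}_t]$ by first substituting the model update rule and then decomposing the update into its constituent random components. From \eqref{ModelUpdate} and \eqref{decmpose}, we have $\mathbf{w}_{t+1}-\mathbf{w}_t = -\lambda \tilde{\mathbf{r}}_t = -\lambda(\mathbf{s}_t + \tilde{\mathbf{n}}_t)$, so $\|\mathbf{w}_{t+1}-\mathbf{w}_t\|^2 = \lambda^2 \|\mathbf{s}_t + \tilde{\mathbf{n}}_t\|^2$. First I would exploit the fact that the effective receiver noise $\tilde{\mathbf{n}}_t$ is zero-mean and independent of the signal $\mathbf{s}_t$ (conditioned on $\mathbf{w}_t$), which lets the cross term vanish in expectation, giving $\mathbb{E}[\|\mathbf{s}_t + \tilde{\mathbf{n}}_t\|^2 \mid \mathbf{w}_t] = \mathbb{E}[\|\mathbf{s}_t\|^2 \mid \mathbf{w}_t] + \mathbb{E}[\|\tilde{\mathbf{n}}_t\|^2 \mid \mathbf{w}_t]$. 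The noise term is immediate: since $\tilde{\mathbf{n}}_t \sim \mathcal{N}(\mathbf{0}, \frac{\sigma_n^2}{2\eta_t}\mathbb{I}^d)$, we get $\mathbb{E}[\|\tilde{\mathbf{n}}_t\|^2] = \frac{d\sigma_n^2}{2\eta_t}$, which after multiplication by $\frac{L}{2}\lambda^2$ produces the $\frac{L\lambda^2 d\sigma_n^2}{4\eta_t}$ term in \eqref{Lem2-final}.

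The main work lies in bounding the signal second moment $\mathbb{E}[\|\mathbf{s}_t\|^2 \mid \mathbf{w}_t]$ where $\mathbf{s}_t = \frac{1}{M}\sum_{m=1}^M \mathbf{g}_{m,t}$. The strategy here is the standard bias-variance split: write $\mathbf{s}_t = \mathbb{E}[\mathbf{s}_t \mid \mathbf{w}_t] + (\mathbf{s}_t - \mathbb{E}[\mathbf{s}_t \mid \mathbf{w}_t])$, so that $\mathbb{E}[\|\mathbf{s}_t\|^2] = \|\mathbb{E}[\mathbf{s}_t \mid \mathbf{w}_t]\|^2 + \mathbb{E}[\|\mathbf{s}_t - \mathbb{E}[\mathbf{s}_t\mid\mathbf{w}_t]\|^2]$. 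By Lemma~\ref{chapt6-ExtraLemma1}'s underlying computation, $\mathbb{E}[\mathbf{s}_t \mid \mathbf{w}_t] = \nabla f(\mathbf{w}_t)$, so the squared-mean piece is exactly $\|\nabla f(\mathbf{w}_t)\|^2$. For the variance piece I would use the Poisson-sampling structure: conditioned on $\mathbf{w}_t$, the per-device batch gradients $\mathbf{g}_{m,t}$ are independent across devices, and within each device the sampling-and-gradient-noise variance can be controlled using assumption \textbf{A3}. The plan is to bound the per-device variance in terms of $\|\nabla f_m(\mathbf{w}_t)\|^2$ and $A_2$ (via \eqref{BoundedVar}), then aggregate across the $M$ devices, and finally convert the resulting $\frac{1}{M}\sum_m \|\nabla f_m(\mathbf{w}_t)\|^2$ into $\|\nabla f(\mathbf{w}_t)\|^2$ using the bounded-similarity assumption \textbf{A4}, which supplies the factor $(C_1+1)\|\nabla f(\mathbf{w}_t)\|^2 + C_2$.

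I anticipate the main obstacle to be the careful accounting of the $(A_1+1)$ and $(C_1+1)$ factors and the combinatorial constants arising from Poisson sampling. Specifically, the sampling randomness couples the selection of indices $\mathcal{B}_{m,t}$ with the gradient noise $\mathbf{z}_{m,t,i}$, so the variance bound must correctly track both sources. Using \eqref{BoundedVar}, each $\mathbf{z}_{m,t,i}$ contributes $A_1\|\nabla f_m(\mathbf{w}_t)\|^2 + A_2$, and the multiplicative factor $A_1+1$ in \eqref{Lem2-final} suggests that the bound folds the gradient's own magnitude into the variance term (i.e., $\mathbb{E}\|\mathbf{g}_{m,t,i}\|^2 \le (A_1+1)\|\nabla f_m(\mathbf{w}_t)\|^2 + A_2$). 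I would therefore first establish a clean per-sample second-moment bound, then propagate it through the Poisson batch average and the device average, being careful to pick up the factor of $2$ in front and the $(A_1+1)(C_1+1)$ product. The independence across devices is what keeps the cross terms from inflating the bound, so verifying that conditional independence given $\mathbf{w}_t$ is the crucial hinge; once that is secured, the remaining steps are routine applications of \textbf{A3}, \textbf{A4}, and the triangle/Cauchy–Schwarz inequalities.
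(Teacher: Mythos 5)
Your proposal is correct in its overall architecture and reaches the stated bound, but the key middle step differs from the paper's. Both proofs begin identically: substitute the update rule, use the zero-mean, independent effective noise to split off $\mathbb{E}[\|\tilde{\mathbf{n}}_t\|^2]=\frac{d\sigma_n^2}{2\eta_t}$, and then reduce to bounding $\mathbb{E}[\|\mathbf{s}_t\|^2\mid\mathbf{w}_t]$. Where you diverge is in that bound: you propose the exact bias--variance identity $\mathbb{E}\|\mathbf{s}_t\|^2=\|\mathbb{E}[\mathbf{s}_t\mid\mathbf{w}_t]\|^2+\mathrm{Var}(\mathbf{s}_t\mid\mathbf{w}_t)$ with $\mathbb{E}[\mathbf{s}_t\mid\mathbf{w}_t]=\nabla f(\mathbf{w}_t)$, followed by a cross-device independence argument so that the variance decomposes as $\frac{1}{M^2}\sum_m\mathrm{Var}(\mathbf{g}_{m,t}\mid\mathbf{w}_t)$. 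The paper instead splits $\mathbf{g}_{m,t,i}=\nabla f_m(\mathbf{w}_t)+\mathbf{z}_{m,t,i}$, kills only the cross term between the two pieces, and then repeatedly applies the crude inequality $\|\sum_{j=1}^J\mathbf{y}_j\|^2\le J\sum_{j=1}^J\|\mathbf{y}_j\|^2$ to \emph{both} the sum over devices and the sum over batch samples, together with $\mathbb{E}[|\mathcal{B}_{m,t}|^2/B_m^2]\le 2$ and \textbf{A4}; this never invokes independence across devices and mechanically produces exactly the constants $4(A_1+1)\big((C_1+1)\|\nabla f(\mathbf{w}_t)\|^2+C_2\big)+2A_2$ appearing (after multiplication by $\frac{L\lambda^2}{2}$) in \eqref{Lem2-final}. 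Your route, if carried through, yields a strictly tighter bound --- the variance contribution picks up an extra $\frac{1}{M}$ and the $\|\nabla f(\mathbf{w}_t)\|^2$ bias term enters with coefficient $1$ rather than $4(A_1+1)(C_1+1)$ --- and the lemma then follows since its right-hand side dominates. What it costs you is two additional verifications the paper's proof sidesteps: (i) conditional independence of $\{\mathbf{g}_{m,t}\}_m$ given $\mathbf{w}_t$, which holds because devices sample disjoint local datasets independently but is an extra hypothesis you must make explicit (the gradient perturbations $\mathbf{z}_{m,t,i}$ across devices are not explicitly declared independent in \textbf{A3}); and (ii) a per-device second-moment bound under Poisson sampling, for which you will still end up using the same norm-of-sum inequality and the $\mathbb{E}[|\mathcal{B}_{m,t}|^2/B_m^2]=1+\frac{1-q_m}{B_m}\le 2$ computation within each device. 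Your per-sample bound $\mathbb{E}\|\mathbf{g}_{m,t,i}\|^2\le(A_1+1)\|\nabla f_m(\mathbf{w}_t)\|^2+A_2$ and the final conversion via \textbf{A4} are exactly right.
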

\begin{proof}
We have
\begin{align}
     \frac{L}{2} \mathbb{E}\Big[ \| \mathbf{w}_{t+1}-  \mathbf{w}_t\|^2 \Big|\mathbf{w}_t\Big]  & \overset{(a)}{=}  \frac{L\lambda^2}{2} \mathbb{E}\Big[ \| \mathbf{s}_t + \tilde{\mathbf{n}}_t \|^2 \Big|\mathbf{w}_t\Big] \\ & \hspace{-1em} \overset{(b)}{=}
     \frac{L\lambda^2}{2} \mathbb{E}\Big[ \| \mathbf{s}_t \|^2+ \|\tilde{\mathbf{n}}_t \|^2 \Big|\mathbf{w}_t\Big] \\ & \hspace{-1em} \overset{(c)}{=} 
     \frac{L\lambda^2}{2} \Big( \mathbb{E}\big[ \| \mathbf{s}_t \|^2 \big| \mathbf{w}_t \big]+ \frac{d \sigma_n^2}{2 \eta_t} \Big),\label{Lem2-eq1}
 \end{align}
 where (a) follows the model update in \eqref{ModelUpdate}, (b) holds since $\tilde{\mathbf{n}}_t$ is zero-mean and independent of $\mathbf{s}_t$, and (c) follows by replacing the variance of $\tilde{\mathbf{n}}_t$ using  \eqref{decmpose}. 
 Now we proceed to bound the first term in \eqref{Lem2-eq1} as
\begin{align}
   & \mathbb{E} \big[ \| \mathbf{s}_t \|^2 \big| \mathbf{w}_t \big]   \overset{(a)}{=} \mathbb{E}\Big[\Big\|\frac{1}{M}\sum_{m=1}^M \frac{1}{B_m} \sum_{i \in \mathcal{B}_{m,t}} \mathbf{g}_{m,t,i} \Big\|^2 \Big | \mathbf{w}_t\Big] \\ & \overset{(b)}{=} \mathbb{E}\Big[\Big\|\frac{1}{M}\sum_{m=1}^M \frac{1}{B_m} \sum_{i \in \mathcal{B}_{m,t}} \Big( \nabla f_m(\mathbf{w}_t)+ \mathbf{z}_{m,t,i}\Big) \Big\|^2 \Big | \mathbf{w}_t\Big] \\ &  \overset{(c)}{=} \mathbb{E}\Big[\Big\|\frac{1}{M}\sum_{m=1}^M \frac{1}{B_m} \sum_{i \in \mathcal{B}_{m,t}}  \nabla f_m(\mathbf{w}_t) \Big\|^2 \Big | \mathbf{w}_t\Big] \nonumber  \\ & \qquad + \mathbb{E}\Big[\Big\|\frac{1}{M}\sum_{m=1}^M \frac{1}{B_m} \sum_{i \in \mathcal{B}_{m,t}} \mathbf{z}_{m,t,i} \Big\|^2 \Big | \mathbf{w}_t\Big],\label{Lem2-eq2}
\end{align}
where (a) follows from the definitions of $\mathbf{s}_t$ and $\mathbf{g}_{m,t}$ in \eqref{decmpose} and \eqref{averaging}, respectively; (b) follows from assumption \textbf{A3}; and (c) holds since \( \mathbf{z}_{m,t,i} \) is zero-mean based on assumption \textbf{A3}. 

Given \( \mathbf{w}_t \), the only source of randomness in the first term on the RHS of \eqref{Lem2-eq2} is the batch sampling, i.e., \( \mathcal{B}_{m,t} \). We can further upper bound this term as follows:
\begin{align}
    &\mathbb{E}_{\mathcal{B}_{m,t}}\Big[\Big\|\frac{1}{M}\sum_{m=1}^M \frac{1}{B_m} \sum_{i \in \mathcal{B}_{m,t}}  \nabla f_m(\mathbf{w}_t) \Big\|^2 \Big] \nonumber \\ &  \overset{(a)}{\le} \frac{1}{M}\sum_{m=1}^M \mathbb{E}_{\mathcal{B}_{m,t}}\Big[\frac{|\mathcal{B}_{m,t} |}{B_m^2} \sum_{i \in \mathcal{B}_{m,t}}  \|\nabla f_m(\mathbf{w}_t) \|^2 \Big] \\ & \overset{(b)}{=} \frac{1}{M}\sum_{m=1}^M \mathbb{E}_{\mathcal{B}_{m,t}}\Big[\frac{|\mathcal{B}_{m,t} |^2}{B_m^2}\Big]  \|\nabla f_m(\mathbf{w}_t) \|^2  \\ &  \overset{(c)}{\le} \frac{2}{M}\sum_{m=1}^M  \|\nabla f_m(\mathbf{w}_t) -\nabla f(\mathbf{w}_t)+  \nabla f(\mathbf{w}_t) \|^2 \\ &  \overset{(d)}{\le} \frac{4}{M}\sum_{m=1}^M \Big( \|\nabla f_m(\mathbf{w}_t) -\nabla f(\mathbf{w}_t)\|^2+  \|\nabla f(\mathbf{w}_t) \|^2 \Big)
    \\ &  \overset{(e)}{\le} 4 \Big( (C_1+1) \|\nabla f(\mathbf{w}_t) \|^2 +C_2\Big), \label{Lem2-eq3}
\end{align}
where (a) is derived by applying the inequality \( \left\| \sum_{j=1}^{J} \mathbf{y}_j \right\|^2 \le J \sum_{j=1}^{J} \| \mathbf{y}_j \|^2 \) to both summations over $m$ and $i$; (b) is derived by simplifying; (c) follows from the fact that \( \mathbb{E}[\frac{|\mathcal{B}_{m,t}|^2}{B_m^2}] = 1+  \frac{(1-q_m)}{B_m} \le 2 \), which holds under Poisson sampling with rate \( q_m = \frac{B_m}{n_m} \); (d)  holds by the inequality \( \| \mathbf{y}_1 + \mathbf{y}_2 \|^2 \le 2( \| \mathbf{y}_1 \|^2 + \| \mathbf{y}_2 \|^2 ) \);  
and (e) is derived using assumption \textbf{A4}.

The second term on the RHS of \eqref{Lem2-eq2}, can be upper bounded as
\begin{align}
\mathbb{E}&\Big[\Big\|\frac{1}{M}\sum_{m=1}^M \frac{1}{B_m} \sum_{i \in \mathcal{B}_{m,t}} \mathbf{z}_{m,t,i} \Big\|^2 \Big | \mathbf{w}_t\Big] \nonumber \\ & \hspace{-0.5em} \overset{(a)}{\le} \frac{1}{M}\sum_{m=1}^M \mathbb{E} \Big[\frac{|\mathcal{B}_{m,t}|}{B_m^2}\sum_{i \in \mathcal{B}_{m,t}} \|\mathbf{z}_{m,t,i}\|^2 \Big| \mathbf{w}_t\Big] \\ & \hspace{-0.5em} \overset{(b)}{=}  \frac{1}{M}\sum_{m=1}^M \mathbb{E}_{\mathcal{B}_{m,t}} \Big[\frac{|\mathcal{B}_{m,t}|}{B_m^2}\sum_{i \in \mathcal{B}_{m,t}} \mathbb{E}\big[\|\mathbf{z}_{m,t,i}\|^2 \big| \mathbf{w}_t\big]\Big] \\ & \hspace{-0.5em} \overset{(c)}{\le}  \frac{1}{M}\sum_{m=1}^M \mathbb{E}_{\mathcal{B}_{m,t}} \Big[\frac{|\mathcal{B}_{m,t}|^2}{B_m^2}\Big] \Big( A_1 \| \nabla f_m(\mathbf{w}_t) \|^2+ A_2\Big)  \\ & \hspace{-0.5em} \overset{(d)}{\le} \frac{2}{M}\sum_{m=1}^M \Big(A_1 \| \nabla f_m(\mathbf{w}_t)\|^2+A_2\Big) \\ & \hspace{-0.5em} \overset{(e)}{=} \frac{2A_1}{M}\sum_{m=1}^M  \| \nabla f_m(\mathbf{w}_t) -\nabla f(\mathbf{w}_t)+ \nabla f(\mathbf{w}_t)\|^2 + 2A_2
\\ & \hspace{-0.5em} \overset{(f)}{\le} \frac{4 A_1}{M}\sum_{m=1}^M  \| \nabla f_m(\mathbf{w}_t) - \nabla f(\mathbf{w}_t)\|^2 \nonumber \\ & \qquad \qquad + \frac{4A_1}{M}\sum_{m=1}^M  \| \nabla f(\mathbf{w}_t)\|^2+ 2A_2
\\ & \hspace{-0.5em} \overset{(g)}{\le}4A_1 \Big((C_1+1) \| \nabla f(\mathbf{w}_t\|^2+C_2 \Big) + 2A_2 , \label{Lem2-eq4}
\end{align}
where (a) is derived by applying the inequality \( \left\| \sum_{j=1}^{J} \mathbf{y}_j \right\|^2 \le J \sum_{j=1}^{J} \| \mathbf{y}_j \|^2 \) to both summations;  
(b) follows by decomposing the expectation over batch sampling and other sources of randomness in round \( t \);  
(c) follows from \eqref{BoundedVar} in \textbf{A3};  
(d) follows from the fact that  \( \mathbb{E}[\frac{|\mathcal{B}_{m,t}|^2}{B_m^2}] = 1+ \frac{(1-q_m)}{B_m} \le 2 \), which holds under Poisson sampling with rate \( q_m = \frac{B_m}{n_m} \);  
(e) follows directly by rearranging the terms;  
(f) holds by the inequality \( \| \mathbf{y}_1 + \mathbf{y}_2 \|^2 \le 2( \| \mathbf{y}_1 \|^2 + \| \mathbf{y}_2 \|^2 ) \);  
and (g) is derived by applying assumption \textbf{A4}.

Now, we substitute \eqref{Lem2-eq3} and \eqref{Lem2-eq4} in \eqref{Lem2-eq2} to form an upper bound on $\mathbb{E}[\|\mathbf{s}_t\|^2 | \mathbf{w}_t]$. Then, plugging in this upper bound in \eqref{Lem2-eq1}, we have
\begin{align}
    &\frac{L}{2} \mathbb{E}\Big[ \| \mathbf{w}_{t+1}-  \mathbf{w}_t\|^2 \Big|\mathbf{w}_t\Big]  \le 
      L \lambda^2 A_2  
     +\frac{L \lambda^2 d \sigma_n^2}{4 \eta_t} \nonumber \\ & + 2L \lambda^2 (A_1+1) \Big( (C_1+1) \|\nabla f(\mathbf{w}_t) \|^2 +C_2\Big)
     ,\label{Lem2-eq5}
\end{align}
which completes the proof.
\end{proof}

\subsection{Proof of Theorem~\ref{chapt6-ConvergenceTheorem}}\label{chapt6-ConvergenceTheoremProof}

\begin{proof}
Based on \textbf{A1}, we have
\begin{align}
    f(\mathbf{w}_{t+1}) &\le     f(\mathbf{w}_{t}) + \Big< \nabla f(\mathbf{w}_t), \mathbf{w}_{t+1}- \mathbf{w}_t  \Big> \nonumber \\ & \qquad \qquad + \frac{L}{2} \| \mathbf{w}_{t+1}- \mathbf{w}_t\|^2.\label{using A1}
\end{align}
Taking expectation from both sides of \eqref{using A1} on the randomness of round $t$ given $\mathbf{w}_t$, we have
\begin{align}
     &\mathbb{E}[f(\mathbf{w}_{t+1}) | \mathbf{w}_t] \nonumber \\ & \le     f(\mathbf{w}_{t}) +    \mathbb{E}\Big[\Big< \nabla f(\mathbf{w}_t), \mathbf{w}_{t+1}- \mathbf{w}_t  \Big> \Big|\mathbf{w}_t\Big] \nonumber \\ & \qquad +    \frac{L}{2} \mathbb{E}\Big[ \| \mathbf{w}_{t+1}- \mathbf{w}_t\|^2 \Big|\mathbf{w}_t\Big] \\ & \overset{(a)}{\le}
     f(\mathbf{w}_{t})  -\lambda \| \nabla f(\mathbf{w}_t)\|^2  + L \lambda^2 A_2   
     +\frac{L \lambda^2 d \sigma_n^2}{4 \eta_t} \nonumber \\ & \qquad +   2L \lambda^2(A_1+1) \Big( (C_1+1) \|\nabla f(\mathbf{w}_t) \|^2 +C_2\Big) 
     \\ & \overset{(b)}{=}
     f(\mathbf{w}_{t}) -  \lambda \Big(1- 2L \lambda (C_1+1)(A_1+1) \Big) \| \nabla f(\mathbf{w}_t)\|^2 \nonumber \\ & \qquad  + 
     L \lambda^2   
     \Big(2C_2(A_1+1) + A_2 \Big) \! 
     + \! \frac{L \lambda^2 d \sigma_n^2}{4 \eta_t}, \label{convt-eq1}
\end{align}
where (a) results from Lemmas~\ref{chapt6-ExtraLemma1} and~\ref{chapt6-ExtraLemma2};  
and (b) is derived by rearranging the terms. Now, we take the expectation over all sources of randomness in the algorithm on both sides of \eqref{convt-eq1}. Rearranging the terms, we obtain
\begin{align}
   & \lambda  \Big(1-2L  \lambda (C_1+1)(A_1+1) \Big)\mathbb{E}  \| \nabla f(\mathbf{w}_t)\|^2 \le \frac{L \lambda^2 d \sigma_n^2}{4 \eta_t} \nonumber \\ & + \mathbb{E}[f(\mathbf{w}_t) - f(\mathbf{w}_{t+1})]  +   L \lambda^2 
     \Big(2C_2(A_1+1) + A_2 \Big)   
     . \label{convt-eq2}
\end{align}
Now, to simplify~\eqref{convt-eq2}, we set learning rate $\lambda$ such that 
\begin{align}
    1-2L \lambda (C_1+1)(A_1+1) \ge \frac{1}{2}. 
\end{align}
Thus, \eqref{convt-eq2} implies the following:
\begin{align}
    \frac{\lambda}{2} & \mathbb{E} \| \nabla f(\mathbf{w}_t)\|^2    \le \mathbb{E}[f(\mathbf{w}_t) - f(\mathbf{w}_{t+1})]  
     +\frac{L \lambda^2 d \sigma_n^2}{4 \eta_t} \nonumber \\ & \qquad + L \lambda^2   
     \Big(2C_2(A_1+1) + A_2 \Big).  \label{convt-eq3}
\end{align}
Summing both sides of \eqref{convt-eq3} from \( t = 0 \) to \( T - 1 \) and dividing by \( \frac{\lambda T}{2} \), we have
\begin{align}
    & \frac{1}{T}  \sum_{t=0}^{T-1}\mathbb{E} \| \nabla f(\mathbf{w}_t)\|^2    \le \frac{2\big(f(\mathbf{w}_0) - \mathbb{E}[f(\mathbf{w}_{T})] \big)}{\lambda T}  \nonumber \\ &  \quad +   2L \lambda   
     \Big(2C_2(A_1+1) + A_2 \Big)   
     + \frac{L \lambda}{T} \sum_{t=0}^{T-1}\frac{ d \sigma_n^2}{2 \eta_t}. \label{convt-eq4}
\end{align}
Further upper bounding \( f(\mathbf{w}_0) - \mathbb{E}[f(\mathbf{w}_{T})] \) on the RHS of \eqref{convt-eq4} by \( f(\mathbf{w}_0) - f^\star \) using assumption \textbf{A2} yields the result stated in Theorem~\ref{chapt6-ConvergenceTheorem}.
\end{proof}

\bibliographystyle{IEEEtran}
\bibliography{refs}

\end{document}